\documentclass[sigconf, authorversion]{acmart}
\usepackage[utf8]{inputenc}
\usepackage{multirow}
\usepackage{graphicx}
\usepackage{textcomp}
\usepackage{xcolor}
\usepackage{xspace}
\usepackage{amsthm}
\usepackage{subfig}
\usepackage{booktabs}
\usepackage{graphics}
\usepackage{mathtools}
\usepackage{hyperref}
\usepackage[font=small,labelfont=bf]{caption}
\usepackage{algorithm}
\usepackage[noend]{algpseudocode}

\definecolor{mycolor}{rgb}{0,0.5,0.8}
\definecolor{mycolor2}{rgb}{0.9,0.0,0.0}
\definecolor{mygreen}{rgb}{0.0,0.55,0.0}
\definecolor{mynewgreen}{rgb}{0.76,0.7,0.5}
\definecolor{myred}{rgb}{0.55,0.0,0.0}
\definecolor{mygrey}{rgb}{0.52,0.52,0.51}
\definecolor{plotcolor}{rgb}{0.95,0.66,0.34}
\definecolor{textcolor1}{rgb}{0.0, 0.43, 0.64}

\DeclarePairedDelimiter\ceil{\lceil}{\rceil}
\DeclarePairedDelimiter\floor{\lfloor}{\rfloor}

\newtheoremstyle{sig}
  {}
  {}
  {\itshape}
  {}
  {\scshape}
  {.}
  {.5em}
  {#1 #2\thmnote{\quad(#3)}}

\theoremstyle{sig}

\newtheorem{theorem}{Theorem}
\newtheorem{problem}{Problem}

\newtheorem{dfn}{Definition}
\newtheorem{fact}{Fact}

\newtheorem{lemma}[theorem]{Lemma}

\newtheorem{example}{Example}

\newtheorem{prop}{Proposition}[dfn]

\copyrightyear{2022} 
\acmYear{2022} 
\setcopyright{acmcopyright}\acmConference[WWW '22]{Proceedings of the ACM Web Conference 2022}{April 25--29, 2022}{Virtual Event, Lyon, France}
\acmBooktitle{Proceedings of the ACM Web Conference 2022 (WWW '22), April 25--29, 2022, Virtual Event, Lyon, France}
\acmPrice{15.00}
\acmDOI{10.1145/3485447.3512205}
\acmISBN{978-1-4503-9096-5/22/04}

\begin{document}
\setlength{\textfloatsep}{1.5pt}
\title{FirmCore Decomposition of Multilayer Networks}

\author{Farnoosh Hashemi}  
\authornote{These authors contributed equally.}
\affiliation{%
  \institution{University of British Columbia}
  \city{Vancouver}
  \state{BC}
  \country{Canada}
}
\email{farsh@cs.ubc.ca}

\author{Ali Behrouz}
\authornotemark[1]
\affiliation{%
  \institution{University of British Columbia}
  \city{Vancouver}
  \state{BC}
  \country{Canada}
}
\email{alibez@cs.ubc.ca}

\author{Laks V.S. Lakshmanan}
\affiliation{
  \institution{University of British Columbia}
  \city{Vancouver}
  \state{BC}
  \country{Canada}
}
\email{laks@cs.ubc.ca}


\begin{CCSXML}
<ccs2012>
    <concept>
       <concept_id>10002950.10003624.10003633.10010917</concept_id>
       <concept_desc>Mathematics of computing~Graph algorithms</concept_desc>
       <concept_significance>500</concept_significance>
    </concept>
 </ccs2012>
\end{CCSXML}

\ccsdesc[500]{Mathematics of computing~Graph algorithms}

\begin{abstract}
  \noindent 
  A key graph mining primitive is extracting dense structures from graphs, and this has led to interesting notions such as $k$-cores which subsequently have been employed as building blocks for capturing the structure of complex networks and for designing efficient approximation  algorithms for challenging problems such as finding the densest subgraph. In applications such as biological, social, and transportation networks, interactions between objects span multiple aspects. 
  Multilayer (ML) networks have been proposed for accurately modeling such applications. In this paper, we present \textit{FirmCore}, a new family of dense subgraphs in ML  networks, and show that it satisfies many of the nice properties of $k$-cores in single-layer graphs. Unlike the state of the art core decomposition of ML graphs, FirmCores have a polynomial time algorithm, making them a powerful tool for understanding the structure of massive ML networks. We also extend FirmCore for directed ML graphs. We show that FirmCores and directed FirmCores can be used to obtain efficient approximation algorithms for finding the densest subgraphs of ML graphs and their directed counterparts. Our extensive experiments over several real ML graphs show that our FirmCore decomposition algorithm is significantly more efficient than known algorithms for core decompositions of ML graphs. Furthermore, it returns solutions of matching or better quality for the densest subgraph problem over (possibly directed) ML graphs.  
\end{abstract}

\keywords{
Graph mining, $k$-core, multi-layer graph, densest subgraph.}

\maketitle
\vspace{0ex}
\section{Introduction}
\label{sec:Introduction}
In applications featuring complex networks such as social, biological, and transportation networks, the interactions between objects tend to span multiple aspects. E.g., interactions between people can be social or professional, and  professional interactions can differ according to topics. Accurate modeling of such applications has led to multilayer (ML) networks~\cite{main-ML}, where nodes can have interactions in multiple layers but there are no inter-layer edges. They have since gained popularity in an array of applications in social and biological networks and in opinion dynamics \cite{social-ml, ML-Covid, ML-IM, Ml-bio}.

\begin{example}
Figure \ref{fig:example}(a) is a  network 
showing three groups of researchers collaborating in various topics, and Figure \ref{fig:example}(b) is its corresponding multilayer perspective, where each layer represents collaborations in an individual topic. While all three groups have the same structure in the single-layer model, the ML graph model  distinguishes these group structures, extracting their  complex relationships. 
\end{example}

\begin{figure}
    \centering
    \subfloat[][Single-layer perspective]{
        \begin{minipage}[c][1\width]{0.25\textwidth}%
        \begin{center}
        \includegraphics[width=0.75\textwidth]{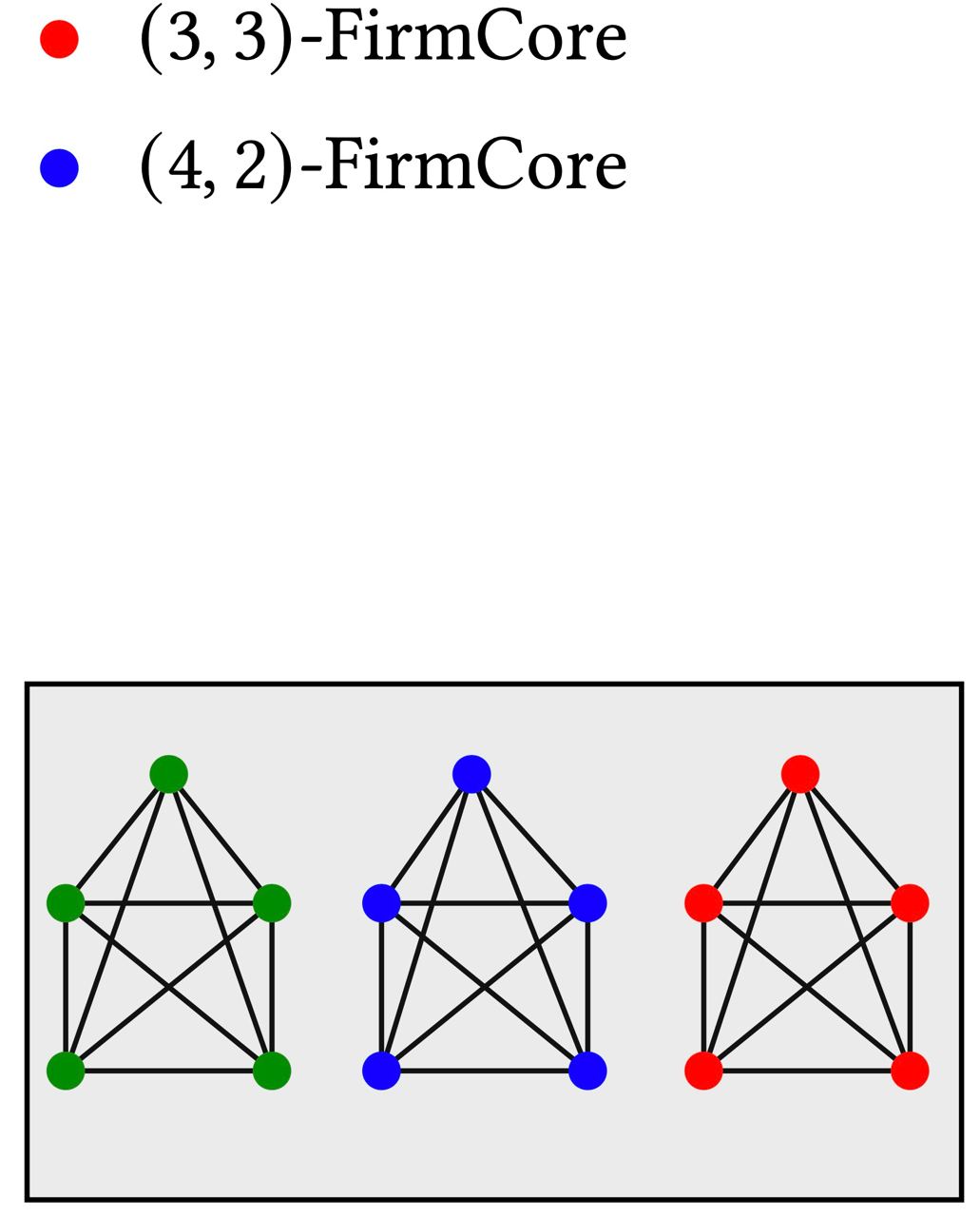}
        \end{center}
        \end{minipage}
    }
    \centering
    \subfloat[][Multilayer perspective]{
        \begin{minipage}[c][1\width]{0.25\textwidth}%
        \includegraphics[width=0.95\textwidth]{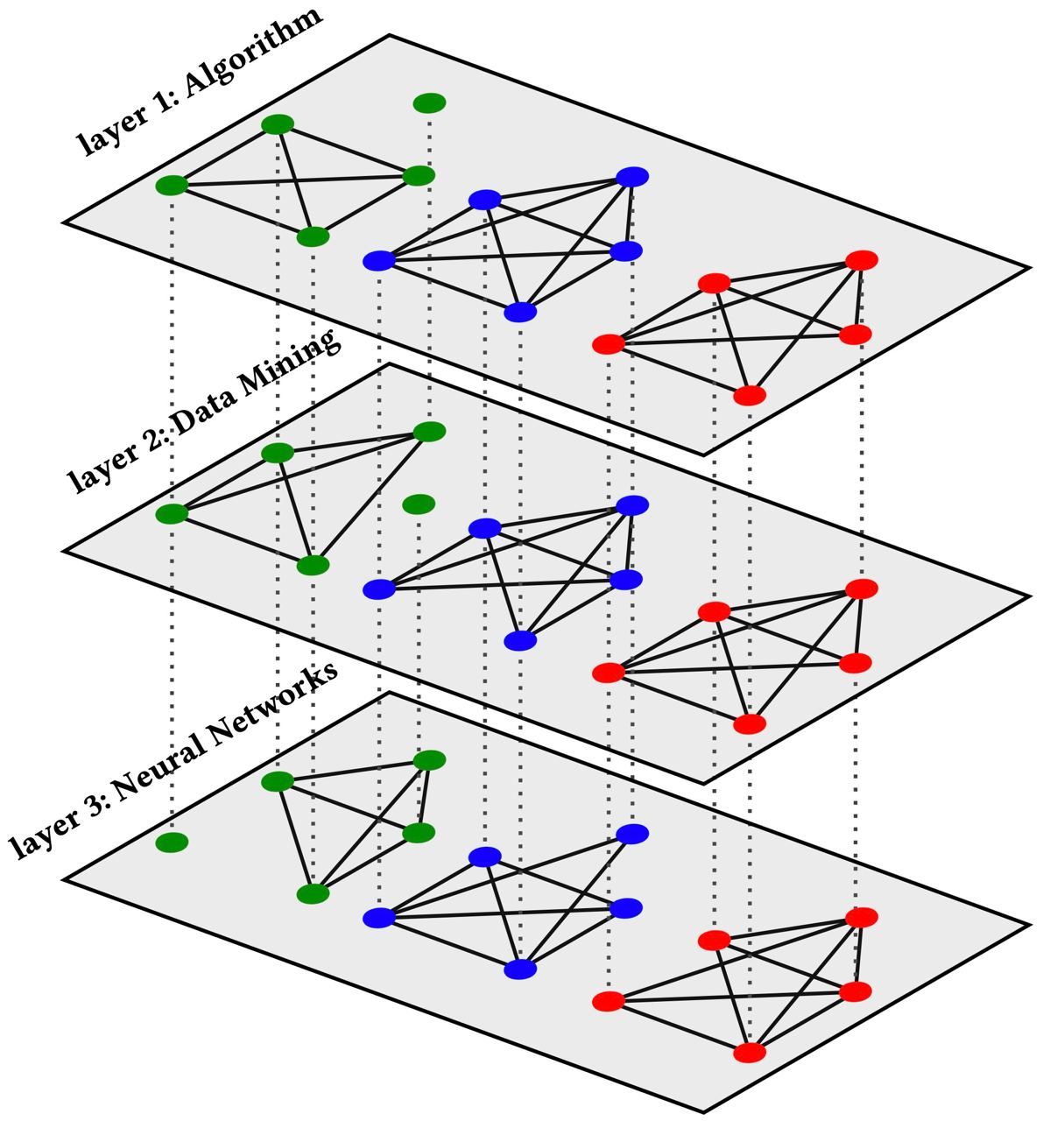}
        \end{minipage}
    }
    \vspace{-2ex}
    \caption{An example of a collaboration network.}
    \label{fig:example}
\end{figure}

Extraction of dense structures in a graph has become a key graph mining primitive with a wide range of applications \cite{bio-dense, finance-dense, web-dense}. 
The core decomposition  \cite{k-core-first} is a widely used  operation since it can be computed in linear time \cite{k-core-algorithm}, is linked to a number of different definitions of a dense subgraph \cite{speed-up-core}, and can be used for efficient approximations of quasi-cliques, betweenness centrality, and densest subgraphs. Azimi et al. \cite{azimi-etal} extended the classic notion of core for ML networks: given $L$ layers and a vector of $|L|$ dimensions \textbf{k}$=[k_\ell]_{\ell \in  L}$ with $k_\ell \in \mathbb{Z}^{\geq 0}$,  the \textbf{k}-core is a maximal subgraph, in which each node in layer $\ell$ has at least $k_\ell$ neighbors. Zhu et al. \cite{Coherehnt-core-ML} define the core of ML networks, termed $d$-core, as a maximal subgraph in which every node in \textit{each} layer has $\geq d$ neighbors in the subgraph. The core decomposition algorithms~\cite{MLcore, CoreCube} based on these two models have an \textit{exponential running time complexity in the number of layers}, making them  prohibitive for large graphs even with a small number of layers. 
Even for the special case of two layers, the state-of-the-art techniques for core mining of ML graphs  have quadratic \cite{Coherehnt-core-ML} and cubic \cite{MLcore} time complexity, making~them~impractical~for~large~networks. 

To~mitigate~the~above~limitations,~we~propose~the~notion~of~\textit{$(k, \lambda)$-FirmCore} as a maximal subgraph in which every node is connected to at least $k$ other nodes within that subgraph, in each of at least $\lambda$  individual layers. As we show, FirmCores retain the~hierarchical~structure of classic cores (for single-layer graphs) and this~leads~to~a \textit{linear time core decomposition of ML networks  in number of nodes/edges}.

In addition to its computational advantages over previous core definitions in ML networks \cite{azimi-etal, Coherehnt-core-ML}, FirmCore can also address some of their limitations and capture complex structures that may not be found by previous formulations. A major limitation of these models is forcing nodes to satisfy degree constraints in \textit{all} layers, including noisy/insignificant  layers \cite{MLcore}. 
These noisy/insignificant layers may be different for each node. Therefore, this hard constraint would result in missing some dense structures. In contrast, FirmCore uses the intuition that a dense structure in ML networks can be identified by a subgraph where each node has enough connections to others in a sufficient number of interactions types (aka layers). Accordingly, even if a node is not densely connected to others in some individual layers, it still can be considered as part of a dense structure if it has enough connections to others in sufficiently many other layers.
\vspace{-2ex}
\begin{example}
In Figure~\ref{fig:example}(b), in each layer, there is an isolated green node, researcher, who does not collaborate in that specific area. Thus, e.g., the top layer (``Algorithm'') is an insignificant layer for the isolated green node in this layer. While green nodes are a component of a $(3, 2)$-FirmCore, as per the lens of  \textbf{k}-core (resp. $d$-core), the only \textbf{k}-core (resp. $d$-core) that contains all green nodes is the $(0, 0, 0)$-core (resp. $0$-core), which includes the entire graph as well! Accordingly, these models cannot distinguish the green nodes from other sparse structures that may exist~in~a~network.  
\end{example}
\vspace{-2ex}

As a real-world motivation, users may have accounts in different social media. We can model this situation as an ML network whose layers  correspond to different~social~media.~Different~interests/themes~may bring together different groups of users in different layers. For instance, considering LinkedIn, FaceBook,  and Instagram, people may have different kinds of intra-layer connections. In general, a diverse community across different networks can be identified as a group in which every user has enough connections to others within the group in a sufficient~number~of~layers.


Directed graphs naturally arise in several applications, and their ML versions lead to directed ML networks. For instance, the Web can be modeled as an ML  directed graph in which each node represents a Web page, and each edge a hyperlink between the  corresponding pages, which is associated with a label. Depending on the application, this label can be a timestamp or the topic of the source page's citing text that points  to the target page. FirmD-Cores can be viewed as a multilayer generalization of communities based on hubs and authorities in directed graphs a la  \cite{gibson-etal-1998}: in a FirmD-Core $(S,T)$, vertices in $T$ can be viewed as akin to ``authorities" over a set of criteria (e.g., topics, time points, etc.) modeled as layers of an ML graph; sufficiently many vertices in $S$, akin to ``hubs", point to them in sufficiently many layers; similarly, sufficiently many authorities in $T$ are pointed to by the hub vertices in $S$ in sufficiently many layers.
\textit{To our knowledge, core decomposition of directed ML networks has not been studied before.}

We make the following contributions: \textbf{(1)} We introduce a novel dense subgraph model for ML networks, \textit{FirmCore} ($\S$~\ref{sec:FirmCore}). \textbf{(2)} We extend FirmCore to directed ML networks, FirmD-Core  ($\S$~\ref{sec:FirmDCore}). \textbf{(3)} We show that FirmCores and FirmD-Cores retain the nice properties of uniqueness and hierarchical containment. Leveraging this, we propose efficient algorithms for FirmCore and FirmD-Core decompositions ($\S$~\ref{sec:Algorithms}). \textbf{(4)} We propose the \textit{first polynomial-time approximation algorithm} for the densest-subgraph problem in undirected ML graphs while the current state-of-the-art has an  exponential runtime ($\S$~\ref{sec:MDS-algorithm}). \textbf{(5)} We extend the densest-subgraph problem to directed ML networks and propose an efficient approximation algorithm based on our FirmD-Core decomposition ($\S$~\ref{sec:MDS-algorithm}). \textbf{(6)} Based on comprehensive experiments on sixteen real datasets, we show that our approximation algorithm for densest-subgraph in undirected ML graphs is \textit{two orders of magnitude faster than the current state-of-the-art algorithm and has less memory usage},  while achieving similar or better quality compared to the state of the art  ($\S$~\ref{sec:Experiments}). For lack of space, we suppress some of the proofs. The omitted proofs, additional experiment results, as well as implementations and datasets can be found in the appendix.

\vspace{-2ex} 
\subsection{Related Work and Background}
\label{sec:RW} 

\noindent
\textbf{$k$-Core. } 
The $k$-core of a graph is the maximal subgraph where each node has degree $\geq k$ within the subgraph. Motivated by the linear-time core decomposition of graphs, $k$-cores have  attracted enormous attention, and have been employed in various applications such as analyzing social networks~\cite{core-social-network1}, anomaly detection~\cite{anomaly-detection1}, detection of influential spreaders~\cite{influence1}, and biology~\cite{drug-core}. Moreover, $k$-core has been extended to various types of graphs~\cite{uncertain-core, span-core, D-core}.

\noindent
\textbf{Densest Subgraph Discovery.} 
There are various versions of the problem of extracting a single-layer dense subgraph, e.g., \cite{clique-based-density, generalized-dense, dense-small-large}, but in view of the hardness of the problem, extracting dense subgraphs based on the average degree, being solvable in polynomial time,  has received much attention. Goldberg \cite{densest_first} proposes an 
algorithm based on maximum flow. To improve efficiency, Asahiro et al.~\cite{greedy-densest} propose a linear-time greedy-based $\frac{1}{2}$-approximation algorithm~\cite{appox-densest}. Kannan and Vinay~\cite{DirectedDensity} proposed the directed densest subgraph problem, and Khuller and Saha~\cite{exact-directed} developed an exact algorithm for this problem. While several approximation algorithms have been suggested~\cite{appox-densest, DirectedDensity}, Ma~et~al.~\cite{xy-core} proposed the most efficient approximation algorithms based on the notion of $[x, y]$-core. 

\noindent
\textbf{Dense Structures in Multilayer Networks. } 
Dense subgraph mining over ML networks is relatively less explored. Jethava et al. \cite{densest-common-subgraph} formulate the densest common subgraph problem and develop a linear-programming formulation. Azimi et al. \cite{azimi-etal} propose a new definition of core, \textbf{k}-core, over ML graphs. Galimberti et al. \cite{MLcore} propose algorithms to find all possible \textbf{k}-cores, and define the densest subgraph problem in ML networks as a trade-off between high density and number of layers exhibiting the high density, and propose a core-based $\frac{1}{2|L|^{\beta}}$-approximation algorithm; their algorithm takes exponential time in number of layers, rendering it impractical for large networks. Liu et al. \cite{CoreCube} propose the CoreCube problem for computing ML $d$-core decomposition on all subsets of layers. This problem is a special case of \textbf{k}-core decomposition~\cite{MLcore}, restricted to vectors where a subset of elements are set to $d$ and other elements set to $0$. Zhu et al. \cite{Coherehnt-core-ML} study the problem of diversified coherent $d$-core search. Jiang et al. \cite{quasi-clique-ml} study a related problem of extracting frequent cross-graph quasi-cliques. Finally, Wu et al. \cite{core-temporal} extend the notion of core~to~temporal graphs. While their definition can be adapted to ML networks, it is equivalent to collapsing the layers, removing edges that occur less than a threshold, and finding cores in the resulting  single-layer graph. It focuses on frequency of interactions in a manner appropriate for temporal graphs, however it cannot capture complex relationships in ML~networks.

Fang et al.~\cite{HIN-core} define the core of heterogeneous information networks. Liu et al.~\cite{core-bipartite} define the core of bipartite graphs and Zhou et al.~\cite{HIN2} extend it to $k$-partite networks. However, all these models emphasize heterogeneous types of entities connected by different relationships, which is different from the concept of ML networks. As such, their approach is inapplicable for ML networks.

\section{Problem Statement}
\label{sec:ProblemDefinition}
We let $G = (V, E, L)$ denote an undirected ML graph, where $V$ is the set of nodes, $L$ is the set of layers, and $E \subseteq V \times V \times L$ is the set of edges. 
The set of neighbors of node $v \in V$ in layer $\ell \in L$ is denoted $N_\ell(v)$ and the degree of $v$ in layer $\ell$ is $\text{deg}_\ell (v) = |N_\ell(v)|$. For a set of nodes $H \subseteq V$, $G_\ell[H] = (H, E_\ell[H])$ denotes the subgraph of $G$ induced by $H$ in layer $\ell$, and $\text{deg}^H_{\ell}(v)$ denotes  the degree of $v$ in this subgraph. We abuse  notation and denote  $G_\ell[V]$ and $E_\ell[V]$ as $G_\ell$ and $E_\ell$, respectively. 

\vspace*{-1ex} 
\subsection{FirmCore Decomposition}
\label{sec:FirmCore}
\begin{dfn}[FirmCore]\label{FirmCore}
Given an undirected ML graph $G$, an integer threshold $1 \leq  \lambda \leq |L|$, and an integer $k \geq 0$, the $(k, \lambda)$-FirmCore of $G$ ($(k, \lambda)$-FC for short) is a maximal subgraph $H = G[C_{k}] = (C_{k}, E[C_{k}], L)$ such that for each node $v\in C_k$ there are at least $\lambda$ layers $\{\ell_1, ..., \ell_\lambda\} \subseteq L$ such that $\text{deg}^H_{\ell_i}(v) \geq k$, $1\leq i\leq \lambda$.
\end{dfn}

Given $\lambda$, the \textit{FirmCore index} of $v$, $core_{\lambda}(v)$, 
is defined as the highest order $k$ of a $(k, \lambda)$-FC containing $v$. Next, we formally define the FirmCore decomposition problem and show that not only it is unique, but also it has the nested property. 
\vspace{-1mm}

\begin{problem}[FirmCore Decomposition]
Given an ML graph $G$, find the FirmCore decomposition of $G$, that is the set of all $(k,\lambda)$-FCs of $G$,  $k \geq 0$ and $1 \leq \lambda \leq |L|$.
\end{problem}

\begin{prop}[Uniqueness]
The FirmCore decomposition of~$G$ is unique. 
\label{prop:FirmCore1}
\end{prop}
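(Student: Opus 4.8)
The plan is to reduce uniqueness to a single structural fact: that the defining property of a $(k,\lambda)$-FirmCore is closed under taking unions of vertex sets. Fix $k$ and $\lambda$, and call a vertex set $C \subseteq V$ \emph{admissible} if in $H = G[C]$ every node $v \in C$ has at least $\lambda$ layers $\ell$ with $\deg^H_\ell(v) \geq k$. By Definition~\ref{FirmCore}, the $(k,\lambda)$-FirmCore is exactly the subgraph induced by a \emph{maximal} admissible set, so it suffices to prove that whenever $C_1$ and $C_2$ are both admissible, so is $C_1 \cup C_2$.

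The key observation I would use is the monotonicity of induced degrees: for any $C \subseteq C'$ and any layer $\ell$, $\deg^{G[C']}_\ell(v) \geq \deg^{G[C]}_\ell(v)$ for every $v \in C$, since enlarging the vertex set can only add, never remove, within-layer neighbors. Now take any $v \in C_1 \cup C_2$; without loss of generality $v \in C_1$. Since $C_1$ is admissible, there exist $\lambda$ layers in which $v$ has degree $\geq k$ within $G[C_1]$. Because $C_1 \subseteq C_1 \cup C_2$, monotonicity gives $\deg^{G[C_1 \cup C_2]}_\ell(v) \geq \deg^{G[C_1]}_\ell(v) \geq k$ in those same $\lambda$ layers, so $v$ still meets the FirmCore condition in $G[C_1 \cup C_2]$. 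Hence $C_1 \cup C_2$ is admissible.

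With union-closure established, uniqueness follows by the standard maximality argument. Suppose $G[C_1]$ and $G[C_2]$ were two distinct maximal admissible subgraphs, so that neither $C_1 \subseteq C_2$ nor $C_2 \subseteq C_1$. Then $C_1 \cup C_2$ is admissible and strictly contains each of $C_1$ and $C_2$, contradicting the maximality of both. Therefore the maximal admissible set is unique for every fixed $(k,\lambda)$, and so is the corresponding FirmCore; ranging over all $k \geq 0$ and $1 \leq \lambda \leq |L|$ then yields uniqueness of the entire decomposition.

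I do not expect a serious obstacle here: the only substantive step is union-closure, which rests entirely on monotonicity of induced degrees under vertex-set inclusion. The one subtlety worth flagging is that the $\lambda$ ``witnessing'' layers can differ from node to node and between $C_1$ and $C_2$; but since the condition only asks each node to exhibit \emph{some} $\lambda$ layers meeting the threshold, reusing each node's own witnessing layers from whichever set it came from is enough, and no coordination across nodes or layers is needed.
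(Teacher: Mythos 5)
Your proof is correct and follows essentially the same route as the paper's: both argue that the union of two $(k,\lambda)$-FirmCores still satisfies the degree condition and therefore contradicts maximality. The only difference is that you spell out the step the paper calls ``trivial'' --- monotonicity of induced degrees under vertex-set inclusion and the per-node choice of witnessing layers --- which is a worthwhile elaboration but not a different argument.
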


\begin{prop}[Hierarchical Structure]\label{prop:FirmCore2}
Given a threshold $\lambda \in \mathbb{N}^+$, and an integer $k \geq 0$, the $(k + 1, \lambda)$-FC and $(k, \lambda + 1)$-FC of~$G$ are subgraphs of its $(k, \lambda)$-FC.
\end{prop}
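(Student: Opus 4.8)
The plan is to verify, for each of the two cores named on the left-hand side, that it already satisfies the (weaker) defining condition of a $(k,\lambda)$-FirmCore, and then to appeal to maximality to conclude that it must be contained in the $(k,\lambda)$-FC. Before doing this I would record one elementary but essential monotonicity fact: if $C \subseteq C'$, then for every $v \in C$ and every layer $\ell$ we have $\deg^{G[C]}_\ell(v) \le \deg^{G[C']}_\ell(v)$, because the subgraph induced by the larger vertex set retains every edge of the subgraph induced by the smaller one. In words, the per-layer induced degree of a fixed vertex can only grow as the vertex set is enlarged.

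Using this, I would treat the two containments separately, and both reduce to trivial arithmetic. For the claim that the $(k+1,\lambda)$-FC is a subgraph of the $(k,\lambda)$-FC, let $H' = G[C']$ be the $(k+1,\lambda)$-FC. By definition every $v \in C'$ has at least $\lambda$ layers in which $\deg^{H'}_\ell(v) \ge k+1$; since $k+1 \ge k$, those same $\lambda$ layers witness $\deg^{H'}_\ell(v) \ge k$, so $H'$ meets the $(k,\lambda)$ condition. For the claim that the $(k,\lambda+1)$-FC is a subgraph of the $(k,\lambda)$-FC, let $H'' = G[C'']$ be the $(k,\lambda+1)$-FC; every $v \in C''$ has at least $\lambda+1 \ge \lambda$ layers with $\deg^{H''}_\ell(v) \ge k$, so again the $(k,\lambda)$ condition holds. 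In both cases the smaller core is itself a subgraph satisfying the defining requirement of a $(k,\lambda)$-FirmCore.

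The step I expect to be the real content, rather than the arithmetic above, is turning ``$H$ satisfies the $(k,\lambda)$ condition'' into ``$H$ is contained in the $(k,\lambda)$-FC,'' since Definition~\ref{FirmCore} only asserts the $(k,\lambda)$-FC to be \emph{a} maximal such subgraph. Here I would lean on Proposition~\ref{prop:FirmCore1}: the engine behind uniqueness is that the family of subgraphs satisfying the $(k,\lambda)$ condition is closed under union. Indeed, if $G[C_1]$ and $G[C_2]$ both satisfy it, then by the monotonicity fact each $v \in C_1 \cup C_2$ retains its at least $\lambda$ qualifying layers in $G[C_1 \cup C_2]$, so $G[C_1 \cup C_2]$ satisfies the condition too. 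Consequently the maximal $(k,\lambda)$-satisfying subgraph is precisely the union of all $(k,\lambda)$-satisfying subgraphs, hence contains every one of them.

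Applying union-closure with $C_1 = C'$ (respectively $C'' $) and $C_2 = C_k$, where $G[C_k]$ denotes the $(k,\lambda)$-FC, forces $C' \subseteq C_k$ and $C'' \subseteq C_k$, which is exactly the asserted hierarchical containment. The one pitfall to avoid is conflating monotonicity of \emph{induced degrees} with monotonicity of the \emph{cores} themselves: the latter is the conclusion we are after and must be derived through union-closure and maximality, not assumed at the outset.
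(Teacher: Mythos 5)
Your proposal is correct and follows essentially the same route as the paper's proof: observe that degree $\ge k+1$ implies degree $\ge k$ and that $\lambda+1$ qualifying layers imply $\lambda$ qualifying layers, so each core on the left satisfies the $(k,\lambda)$ condition. Your explicit union-closure argument to convert ``satisfies the condition'' into ``contained in the maximal such subgraph'' is a step the paper leaves implicit, and it is a welcome tightening rather than a departure.
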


\begin{example}
In Figure~\ref{fig:example}, red nodes are $(3, 3)$-FC as all red nodes~in~three layers have at least three red neighbors. Similarly, blue nodes are $(4, 2)$-FC as each node in at least two layers has~at~least~four~blue~neighbors.
\end{example}

Since a $k$-clique is contained in a $(k - 1)$-core, one application of single-layer core is speeding up the algorithm for finding maximum cliques, a classic NP-hard problem. Jiang et al. \cite{quasi-clique-ml} extend this problem and propose an algorithm for ML graphs. Specifically, given an ML graph $G = (V, E, L)$, a function $\Gamma~:~L~\rightarrow~(0,~1]$, a real threshold $min\_sup \in (0, 1]$, and an integer $min\_size \geq 1$, the \textit{frequent cross-graph quasi-clique} problem is to find all maximal subgraphs $G[H]$ of size $\geq min\_size$ such that $\exists$ at least $min\_sup\cdot|L|$ layers  $\ell \in L$ for which $G[H]$ is a $\Gamma(\ell)$-quasi-clique in layer $\ell$. The following result shows that FirmCore can  be used to speed up frequent cross-graph quasi-clique extraction. 

\begin{prop}
\label{prop:quasi-clique-prop}
Given an ML graph $G = (V, E, L)$, a function $\Gamma : L \rightarrow (0, 1]$, and a real threshold $min\_sup \in (0, 1]$, a frequent cross graph quasi-clique of size $k$ of $G$ is contained in the $(\gamma (k - 1), \lambda)$-FC where $\gamma = \min_{\ell \in L} \Gamma(\ell)$ and $\lambda = min\_sup\cdot|L|$.
\end{prop}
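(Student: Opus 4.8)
The plan is to show that a frequent cross-graph quasi-clique of size $k$ already satisfies the defining degree condition of a $(\gamma(k-1),\lambda)$-FirmCore, and then to invoke the maximality of the FirmCore to obtain containment. First I would recall the (minimum-degree) definition of a quasi-clique: $G[H]$ is a $\Gamma(\ell)$-quasi-clique in layer $\ell$ precisely when every node $v \in H$ satisfies $\text{deg}^H_\ell(v) \geq \Gamma(\ell)(|H|-1)$, i.e.\ each vertex is adjacent within that layer to at least a $\Gamma(\ell)$ fraction of the other $|H|-1$ vertices.

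Next I would fix a frequent cross-graph quasi-clique $G[H]$ with $|H| = k$ and let $\mathcal{L} \subseteq L$ be the set of layers in which $G[H]$ is a $\Gamma(\ell)$-quasi-clique; by the problem definition $|\mathcal{L}| \geq min\_sup\cdot|L| = \lambda$. The key inequality is that for every $\ell \in \mathcal{L}$ and every $v \in H$,
\[
  \text{deg}^H_\ell(v) \geq \Gamma(\ell)(k-1) \geq \gamma(k-1),
\]
the last step using $\gamma = \min_{\ell\in L}\Gamma(\ell) \leq \Gamma(\ell)$. Thus each node of $H$ has degree at least $\gamma(k-1)$ within $G[H]$ in at least $\lambda$ layers, which is exactly the defining condition of a $(\gamma(k-1),\lambda)$-FirmCore (since degrees are integers, the threshold is read as $\lceil\gamma(k-1)\rceil$).

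It then remains to promote ``$G[H]$ satisfies the FirmCore condition'' to ``$G[H]$ is contained in the FirmCore,'' and this is the step I regard as the crux. Here I would reuse the mechanism behind Proposition~\ref{prop:FirmCore1}: the union of two subgraphs each satisfying the $(k,\lambda)$-FirmCore condition again satisfies it, because passing to a larger vertex set can only increase every $\text{deg}_\ell(v)$ and hence preserves the ``$\geq k$ in at least $\lambda$ layers'' property. Consequently the $(\gamma(k-1),\lambda)$-FirmCore, being the \emph{maximal} subgraph with this property, contains every subgraph enjoying the property, and in particular it contains $G[H]$.

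I expect no serious obstacle beyond some harmless bookkeeping around non-integer thresholds: neither $\gamma(k-1)$ nor $min\_sup\cdot|L|$ need be an integer, whereas the FirmCore parameters are. Since vertex degrees and layer counts are integers, ``at least $min\_sup\cdot|L|$ layers'' means at least $\lceil min\_sup\cdot|L|\rceil$ of them, and ``$\geq\gamma(k-1)$'' for an integer degree is equivalent to ``$\geq\lceil\gamma(k-1)\rceil$,'' so I would merely fix this rounding convention rather than dwell on it. All the conceptual content sits in the single inequality $\Gamma(\ell)\geq\gamma$, which lets one uniform FirmCore threshold absorb every per-layer quasi-clique density at once.
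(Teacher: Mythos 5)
Your proposal is correct and follows essentially the same route as the paper's proof: both rest on the single inequality $\text{deg}^H_\ell(v)\geq\Gamma(\ell)(k-1)\geq\gamma(k-1)$ in at least $\lambda$ layers, combined with the maximality of the FirmCore (the paper phrases this as a contradiction for a node of $H$ outside the FirmCore, you phrase it directly via union-closure). Your version is, if anything, a more carefully written rendering of the same argument, including the rounding conventions the paper leaves implicit.
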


\subsection{FirmCore~Decomposition~of~Directed~Graph}
\label{sec:FirmDCore}
Let $G = (V, E, L)$ be a directed ML graph, and $\ell\in L$.  We use $\text{deg}^{-} (v)$ (resp. $\text{deg}^{-}_\ell (v)$) and $\text{deg}^{+} (v)$ (resp. $\text{deg}_\ell^{+} (v)$) to denote indegree (resp. indegree in layer $\ell$) and outdegree of $v$ (resp. outdegree in layer~$\ell$). 
 Given two sets $S, T \subseteq V$, which are not necessarily disjoint, we denote the set of all edges from $S$ to $T$ by $E(S, T)$. The subgraph induced by $S$ and $T$ with the edge set of $E(S, T)$ is called an $(S, T)$-induced subgraph of $G$ and is denoted $G[S, T] := (S \cup T, E[S, T], L)$. An induced subgraph $G[S, T]$ is maximal w.r.t. a property if there is no other induced subgraph $G[S', T']$ with $S \subseteq S'$ and $T \subseteq T'$, which satisfies that property. Now, we formally define the FirmD-Core in directed graphs.

\begin{dfn}[FirmD-Core]\label{FirmDCore}
Given a directed ML graph $G = (V, E, L)$, an integer threshold $1 \leq \lambda \leq |L|$, and integers $k, r \geq 0$, the $(k, r, \lambda)$-FirmD-Core of $G$ ($(k, r, \lambda)$-FDC for short) is a maximal $(S, T)$-induced subgraph $H = G[S, T] = (S \cup T, E[S, T], L)$ such that: 

    (1) $\forall v\in S$, there are at least $\lambda$ layers~$\ell\in L$~such that~$\text{deg}^{H,+}_{\ell}(v)~\geq~k$,
    
    (2) $\forall u\in T$, there are at least $\lambda$ layers  $\ell \in L$ such that $\text{deg}^{H,-}_{\ell}(u)\geq~r$.

\end{dfn}

Next we define the FirmD-Core decomposition problem and provide two key properties of it.

\begin{problem}[FirmD-Core Decomposition]
Given a directed ML graph $G = (V, E, L)$, find the FirmD-Core decomposition of $G$, that is the set of all $(k, r, \lambda)$-FDCs of $G$,  $k,r \geq 0$ and $1 \leq \lambda \leq |L|$.
\end{problem}


\begin{prop}[Uniqueness]
The FirmD-Core decomposition of~$G$ is unique. 
\label{prop:FirmDCore-prop}
\end{prop}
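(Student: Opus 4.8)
The plan is to mirror the uniqueness argument for the undirected FirmCore (Proposition~\ref{prop:FirmCore1}): I will show that the ``union'' of two FirmD-Cores is again an induced subgraph satisfying the two degree conditions of Definition~\ref{FirmDCore}, so that two distinct maximal such subgraphs cannot coexist. Concretely, suppose $G[S_1,T_1]$ and $G[S_2,T_2]$ are both $(k,r,\lambda)$-FDCs, and consider the $(S,T)$-induced subgraph $H_\cup = G[S_1\cup S_2,\, T_1\cup T_2]$. Since $E[S_1\cup S_2, T_1\cup T_2]$ contains every edge of $E[S_1,T_1]$ and of $E[S_2,T_2]$, it suffices to verify that $H_\cup$ satisfies conditions (1) and (2), and then invoke the two-sided maximality clause of Definition~\ref{FirmDCore}.

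First I would establish the crucial monotonicity of directed induced degrees. For any $v\in S$, the quantity $\text{deg}^{H,+}_\ell(v)$ counts only edges leaving $v$ whose head lies in $T$; hence it is monotone nondecreasing when $T$ is enlarged and is unaffected by enlarging $S$. Dually, for any $u\in T$, $\text{deg}^{H,-}_\ell(u)$ counts only edges entering $u$ whose tail lies in $S$, so it is monotone nondecreasing in $S$ and unaffected by enlarging $T$. Therefore, for a vertex $v\in S_1$, the $\lambda$ layers witnessing $\text{deg}^{H_1,+}_\ell(v)\ge k$ still witness $\text{deg}^{H_\cup,+}_\ell(v)\ge k$ because $T_1\subseteq T_1\cup T_2$; the symmetric statement holds for $v\in S_2$, and analogously the in-degree condition (2) is preserved for every $u\in T_1\cup T_2$ since $S_1,S_2\subseteq S_1\cup S_2$. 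This shows $H_\cup$ satisfies both FirmD-Core conditions.

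To finish, I would apply maximality twice. Because $S_1\subseteq S_1\cup S_2$ and $T_1\subseteq T_1\cup T_2$ and $H_\cup$ satisfies the FirmD-Core conditions, maximality of $G[S_1,T_1]$ forces $S_1=S_1\cup S_2$ and $T_1=T_1\cup T_2$, i.e.\ $S_2\subseteq S_1$ and $T_2\subseteq T_1$; by the symmetric argument applied to $G[S_2,T_2]$ we obtain the reverse inclusions, whence $S_1=S_2$ and $T_1=T_2$, giving uniqueness.

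The main obstacle, and the only place where the directed case genuinely differs from the undirected one, is that the maximality notion is two-sided ($S\subseteq S'$ \emph{and} $T\subseteq T'$) while the two degree conditions have opposite orientation. I must therefore check that enlarging \emph{both} coordinates simultaneously never decreases either the out-degrees governing condition (1) or the in-degrees governing condition (2). The observation that induced out-degrees depend on $T$ alone and induced in-degrees on $S$ alone is exactly what rules out any adverse interaction, so no vertex can be ``broken'' by forming the union; once this monotonicity is pinned down, the remainder is routine.
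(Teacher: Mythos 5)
Your proof is correct and follows essentially the same route the paper takes for the undirected case (Proposition~\ref{prop:FirmCore1}), namely forming the union of two putative $(k,r,\lambda)$-FDCs, checking it still satisfies the degree conditions, and contradicting two-sided maximality; the paper suppresses the directed proof, but this is the intended argument. Your explicit observation that induced out-degrees depend only on $T$ and induced in-degrees only on $S$ is exactly the right detail to make the union step go through.
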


\begin{prop}[Hierarchical Structure]
Given a threshold $\lambda \in \mathbb{N}^+$, and two integers $k, r \geq 0$, the $(k + 1, r, \lambda)$-FDC, $(k, r+1, \lambda)$-FDC, and $(k, r, \lambda + 1)$-FDC of~$G$ are subgraphs of its $(k, r, \lambda)$-FDC.
\end{prop}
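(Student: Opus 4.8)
The plan is to establish all three containments by a single two-step recipe: first I show that the core with the tightened parameter is an $(S,T)$-induced subgraph that already satisfies the \emph{defining property} of the $(k,r,\lambda)$-FirmD-Core, and then I invoke uniqueness (Proposition~\ref{prop:FirmDCore-prop}) to conclude that it must sit inside the (unique maximal) $(k,r,\lambda)$-FDC. The driver behind all three cases is that each defining condition in Definition~\ref{FirmDCore} is monotone in its parameter: lowering $k$, $r$, or $\lambda$ can only make the conditions easier to meet. Crucially, the relevant in/out-degrees are computed \emph{inside} the fixed subgraph $H$, so they are unchanged when $H$ is merely reinterpreted as a candidate for a weaker core.

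Concretely, write the $(k+1,r,\lambda)$-FDC as $H = G[S,T]$. Every $v \in S$ has $\text{deg}^{H,+}_{\ell}(v) \geq k+1$ on at least $\lambda$ layers; since $k+1 \geq k$, those same layers witness $\text{deg}^{H,+}_{\ell}(v) \geq k$, while the indegree condition on $T$ is untouched. Hence $H$ meets conditions~(1) and~(2) for parameters $(k,r,\lambda)$. The case $(k,r+1,\lambda)$ is symmetric, using $r+1 \geq r$ on the indegree side. For $(k,r,\lambda+1)$, if a vertex meets its degree threshold on at least $\lambda+1$ layers it \emph{a fortiori} meets it on at least $\lambda$ layers, so again both conditions hold with layer count $\lambda$.

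It remains to pass from ``satisfies the $(k,r,\lambda)$ property'' to ``is a subgraph of the $(k,r,\lambda)$-FDC,'' which is where uniqueness does the work. The underlying fact I rely on is that the family of $(S,T)$-induced subgraphs meeting the FirmD-Core conditions is closed under the componentwise union $G[S_1\cup S_2,\, T_1\cup T_2]$---enlarging $T$ cannot decrease the outdegree of any $S$-vertex, and enlarging $S$ cannot decrease the indegree of any $T$-vertex---so over the finite vertex set this family has a \emph{maximum} element, namely the $(k,r,\lambda)$-FDC, which then contains every satisfying subgraph. Applying this to each of the three subgraphs above yields $S \subseteq S^\ast$ and $T \subseteq T^\ast$ with $G[S^\ast,T^\ast]$ the $(k,r,\lambda)$-FDC, the desired containment. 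I expect no serious obstacle; the one point to handle with care is that all degree inequalities must be read relative to $H$ rather than to all of $G$, after which each case collapses to a one-line inequality and uniqueness finishes the proof.
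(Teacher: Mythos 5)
Your proof is correct and follows essentially the same route the paper takes for the undirected analogue (Proposition~\ref{prop:FirmCore2}): a one-line monotonicity observation that tightening $k$, $r$, or $\lambda$ only strengthens the defining degree/layer conditions, combined with the union-closure/maximality argument from the uniqueness proof to pass from ``satisfies the property'' to ``contained in the maximal core.'' The paper suppresses the proof of this directed version, but your argument is exactly the expected adaptation, and your explicit treatment of the componentwise union $G[S_1\cup S_2,\,T_1\cup T_2]$ and of degrees being computed inside the fixed induced subgraph fills in the steps the paper leaves implicit.
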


\vspace{-1ex}
\section{Algorithms}
\label{sec:Algorithms}
In this section, we design two exact algorithms for FirmCore and FirmD-Core decompositions.
\vspace{-1ex}
\subsection{FirmCore Decomposition}
\label{sec:alg_FirmCore}
One of the important challenges of dense structure mining in ML networks is time complexity. Since most complex networks, including social and biological networks, are massive, even non-linear algorithms can be infeasible for analyzing them. One potential fast algorithm to address this issue is to borrow the idea of $k$-core decomposition and iteratively remove a node that in at most $\lambda - 1$ layers has at least $k$ neighbors. However, in  contrast to single-layer networks, updating and checking a degree constraint on each node in ML graphs is not a simple task, since the degree of each node $v$ is defined as a vector $\text{deg}(v)$ with $|L|$ elements, $\ell$-th element being the degree of $v$ in layer $\ell$. 

The advantage of FirmCore is that we only need to look at the $\lambda$-th largest value in the degree vector, which we call  Top$-\lambda($deg$(v))$. 
\begin{fact}\label{fact:1}
For each $v \in V$, Top$-\lambda($deg$(v)) \geq core_\lambda(v).$
\end{fact}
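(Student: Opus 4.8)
The plan is to unpack the definition of $core_\lambda(v)$ and reduce the claim to the elementary monotonicity of degree under passage to an induced subgraph. Write $k^\ast = core_\lambda(v)$. By definition of the FirmCore index, $k^\ast$ is the largest $k$ for which some $(k,\lambda)$-FirmCore contains $v$; in particular there exists a $(k^\ast,\lambda)$-FirmCore $H = G[C_{k^\ast}]$ with $v \in C_{k^\ast}$. Everything else flows from examining what the FirmCore condition guarantees about $v$ and comparing the local (within-$H$) degrees to the global degrees in $G$.

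Next I would invoke Definition~\ref{FirmCore} applied to the node $v$ inside $H$: there exist at least $\lambda$ layers $\ell_1,\dots,\ell_\lambda \in L$ such that $\text{deg}^H_{\ell_i}(v) \geq k^\ast$ for every $1 \leq i \leq \lambda$. The key step is then the observation that $H$ is the subgraph of $G$ induced by the vertex set $C_{k^\ast} \subseteq V$, so in each layer we have $\text{deg}^H_{\ell_i}(v) \leq \text{deg}_{\ell_i}(v)$, because restricting to a vertex subset can only delete incident edges and hence never increases a node's degree. Chaining these two inequalities yields $\text{deg}_{\ell_i}(v) \geq k^\ast$ for each of the $\lambda$ layers $\ell_1,\dots,\ell_\lambda$.

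Finally I would translate this into a statement about the degree vector. Having at least $\lambda$ coordinates of $\text{deg}(v)$ that are $\geq k^\ast$ means, by the definition of $\text{Top-}\lambda$ as the $\lambda$-th largest coordinate of the vector, that $\text{Top-}\lambda(\text{deg}(v)) \geq k^\ast = core_\lambda(v)$, which is exactly the claim.

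The argument has essentially no hard part; the only point that demands any care is the direction of the degree inequality. One must use that the FirmCore is an \emph{induced} subgraph on a vertex subset, so per-layer degrees can only shrink relative to the full graph $G$, never grow — this is precisely what lets the $\lambda$ ``heavy'' layers guaranteed inside $H$ survive as $\lambda$ heavy layers in $G$. Everything else is bookkeeping about how those $\lambda$ layers correspond to the $\lambda$ largest entries selected by $\text{Top-}\lambda$.
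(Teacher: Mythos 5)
Your proof is correct and matches the paper's reasoning: the paper states this fact with only the one-line contrapositive justification that a node with Top$-\lambda(\text{deg}(v)) = k$ cannot belong to any $(k',\lambda)$-FirmCore for $k' > k$, and your argument is exactly the direct form of that observation, using that degrees in an induced subgraph never exceed degrees in $G$. No gaps.
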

For a given vertex $v$, if Top$-\lambda($deg$(v)) = k$, then it cannot be a part of $(k', \lambda)$-FirmCore, for $k' > k$. Therefore, given $\lambda$, we can consider Top$-\lambda($deg$(v))$ as an upper bound on the core index of $v$. In the FirmCore decomposition, we use  bin-sort \cite{k-core-algorithm}. Simply, we recursively pick a vertex $v$ with the lowest Top$-\lambda($deg$(v))$, assign its core number as its Top$-\lambda($deg$(v))$, and then remove it from~the~graph. 

\begin{algorithm}[t]
    \small
    \caption{$(k, \lambda)$-FirmCore}
    \label{alg:FirmCore}
    \begin{algorithmic}[1]
        \Require{An ML graph $G = (V, E, L)$, and a threshold $\lambda \in \mathbb{N}^+$}
        \Ensure{FirmCore index core$_\lambda(v)$ for each $v \in V$}
        \For{$v \in V$}
            \State $I[v] \leftarrow \text{Top-$\lambda$}(\text{deg}(v))$ \Comment{$\{$Fact~\ref{fact:1}$\}$}
            \State $B[I[v]] \leftarrow B[I[v]] \cup \{v\}$
        \EndFor
        \For{$k = 1, 2, \dots, |V|$}
            \While{$B[k] \neq \emptyset$}
                \State pick and remove $v$ from $B[k]$
                \State core$_\lambda(v) \leftarrow k$, $N \leftarrow \emptyset$
                \For{$(v, u, \ell) \in E$ and $I[u] > k$}
                    \State $\text{deg}_{\ell}(u) \leftarrow \text{deg}_{\ell}(u) - 1$
                    \If{$\text{deg}_{\ell}(u) = I[u] - 1$}\Comment{$\{$Fact~\ref{fact:2}$\}$}
                        \State $N \leftarrow N \cup \{u\}$
                    \EndIf
                \EndFor
                \For{$u \in N$}
                    \State remove $u$ from $B[I[u]]$
                    \State update $I[u]$ \Comment{$\{$By hybrid approach$\}$}
                    \State $B[I[u]] \leftarrow B[I[u]] \cup \{u\}$
                \EndFor
                \State $V \leftarrow V \:\char`\\ \: \{v\}$
            \EndWhile
        \EndFor
    \end{algorithmic}
\end{algorithm}


Algorithm \ref{alg:FirmCore} processes the nodes in increasing order of Top$-\lambda$ degree, by using a vector $B$ of lists such that each element $i$ contains all nodes with Top$-\lambda$ degree equal to $i$. Based on this technique, we keep nodes sorted throughout the algorithm and can update each element in $\mathcal{O}(1)$ time. After initializing the vector $B$, Algorithm \ref{alg:FirmCore} starts processing of $B$'s elements in increasing order, and if a node $v$ is processed at iteration $k$, its core$_\lambda$ is assigned to $k$ and removed from the graph. In order to remove a vertex from a graph, we need to update the degree of its neighbors in each layer, which leads to changing the Top$-\lambda$ degree of its neighbor, and changing their bucket accordingly. A straightforward approach is to update the Top$-\lambda$ degree and bucket of a node when we update its degree in a layer; however, it is inefficient since it re-computes the Top$-\lambda$ degree of some nodes without any changes in their Top$-\lambda$ degree. To address this issue, we use the following fact:
\vspace*{-1mm}
\begin{fact}
\label{fact:2}
Removing $v \in V$ cannot affect Top$-\lambda(deg(u))$ for $u \in N_\ell(v)$, unless before removing $v$, $\text{deg}_\ell (u) = $Top$-\lambda($deg$(u))$. 
\end{fact}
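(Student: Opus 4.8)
The plan is to reduce the claim to a purely combinatorial statement about order statistics of an integer vector under a single unit decrement. For the neighbor $u$, write its degree vector as $\text{deg}(u) = (\text{deg}_1(u), \dots, \text{deg}_{|L|}(u))$, and recall that Top$-\lambda(\text{deg}(u))$ is by definition the $\lambda$-th largest coordinate of this vector. Deleting the edge $(v, u, \ell)$ lowers exactly one coordinate, namely $\text{deg}_\ell(u)$, by one and leaves every other coordinate fixed; since the algorithm processes the edges incident to $v$ one at a time, it suffices to analyze the effect of a single such decrement. Setting $t := $ Top$-\lambda(\text{deg}(u))$ for the value \emph{before} removal, I want to show that if $\text{deg}_\ell(u) \neq t$, then the $\lambda$-th largest coordinate is still $t$ after the decrement.

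First I would record the easy direction. Decreasing a single entry of a vector can only decrease or preserve each of its order statistics, so after the decrement Top$-\lambda(\text{deg}(u)) \leq t$; it therefore remains to prove the matching lower bound under the hypothesis $\text{deg}_\ell(u) \neq t$. For this I would count the coordinates that are at least $t$. By definition of $t$ as the $\lambda$-th largest value, at least $\lambda$ coordinates satisfy $\text{deg}_i(u) \geq t$ before removal. I then split on the two cases compatible with $\text{deg}_\ell(u) \neq t$. If $\text{deg}_\ell(u) < t$, the decremented coordinate was never among those counted, so at least $\lambda$ coordinates still exceed or equal $t$. If $\text{deg}_\ell(u) > t$, then integrality forces $\text{deg}_\ell(u) \geq t + 1$, so after subtracting one the coordinate is still $\geq t$, and again at least $\lambda$ coordinates are $\geq t$. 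In both cases the new $\lambda$-th largest value is $\geq t$, which together with the monotonicity bound gives equality, i.e. Top$-\lambda(\text{deg}(u))$ is unchanged.

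The one place to be careful is the boundary, and this is where the proof really uses its hypotheses rather than being a routine calculation. The argument genuinely needs both $\text{deg}_\ell(u) \neq t$ and integrality: the only configuration in which Top$-\lambda$ can actually drop is $\text{deg}_\ell(u) = t$, where the decremented coordinate was a coordinate realizing the $\lambda$-th largest value and the count of coordinates $\geq t$ can fall below $\lambda$. I would make sure the counting is robust to ties — several coordinates may equal $t$ simultaneously — by phrasing everything in terms of the number of coordinates $\geq t$ rather than selecting a particular entry of the sorted vector, which removes any ambiguity in the ordering. The main obstacle is thus not technical difficulty but precision about ties and about strict-versus-weak inequalities, since a careless treatment of the $\text{deg}_\ell(u) = t$ case is exactly what the stated exception is guarding against.
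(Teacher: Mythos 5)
Your proof is correct: the paper states Fact~\ref{fact:2} without any proof (it is not among the proofs supplied in the appendix), and your argument --- bounding the new Top-$\lambda$ value above by monotonicity of order statistics under a unit decrement, and below by counting the coordinates that are $\geq t$ in the two cases $\text{deg}_\ell(u) < t$ and $\text{deg}_\ell(u) > t$ --- is exactly the justification the paper leaves implicit. Your care with ties (counting coordinates $\geq t$ rather than tracking a particular sorted position) and with reducing the removal of $v$ to a sequence of single-edge decrements matches how the fact is actually used in lines 9--12 of Algorithm~\ref{alg:FirmCore}, so nothing is missing.
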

\vspace*{-1mm}
 Therefore, in lines 10 and 11 of the algorithm we store all potential nodes such that their Top$-\lambda$ degree can be changed, and in lines 12-15 we update their Top$-\lambda$ degree and their buckets. To improve the efficiency of the algorithm we use the fact that for each node in $N$, the Top$-\lambda$ degree can be unchanged or decreases by 1. The first method of updating the $I[u]$ is to check all deg$_\ell(u)$ to see whether there are $\lambda$ values bigger than $I[u]$, and if so, $I[u]$ does not need to be  updated and it is decremented  by one otherwise. The second method is to re-compute the Top$-\lambda$ degree each time. While the first method needs $\mathcal{O}(|L|)$ time, the second approach takes $\mathcal{O}(\lambda \log |L|)$. Therefore, since for $\lambda \geq \frac{|L|}{\log |L|}$ (resp. for $\lambda < \frac{|L|}{\log |L|}$)  the first method (resp. the second method)  is more efficient, we use a hybrid approach to improve the performance.

Algorithm \ref{alg:FirmCore} finds the FirmCore indices ($core_\lambda$) for a given $\lambda$, but for FirmCore decomposition we need to find all $core_\lambda$s for all possible values of $\lambda$. A possible way is to re-run Algorithm \ref{alg:FirmCore} for each value of $\lambda$, but it is inefficient since we can avoid re-computing the Top$-\lambda$ degree in line 2. Alternatively, we first sort the vector degree of each node in decreasing order and when threshold is $\lambda$, we assign $\lambda$-th element of sorted vector degree to $I[v]$. Moreover, since running Algorithm \ref{alg:FirmCore} for different values of $\lambda$ is independent, we can use multi-processing programming, which makes~the~code~much~faster. 


\noindent
\textbf{Computational complexity.}
The time complexity of Algorithm~\ref{alg:FirmCore} is $\mathcal{O}((|E| + |V|)|L| + |V|\lambda \log |L|)$, so based on the efficient computing of Top$-\lambda$ degree for all values of $\lambda$, the time complexity of FirmCore decomposition is $\mathcal{O}(|E||L|^2 + |V||L|\log |L|)$. Using a heap, we can find the $\lambda$-th largest element of degree vector in $\mathcal{O}(|L|+\lambda \log|L|)$ time so lines 1-3 take $\mathcal{O}(|V|( |L| + \lambda\log |L|))$ time. We remove each vertex exactly one time from the buckets, taking time $\mathcal{O}(|V|)$, and each edge is counted once, taking time $\mathcal{O}(|E|)$ (lines 8-11). Finally, to update the buckets for a given node we need $\mathcal{O}(|L|)$,~so~line~14~takes~$\mathcal{O}(|E| |L|)$~time.

\begin{algorithm}[t]
    \small
    \caption{$(k, r, \lambda)$-FirmD-Core}
    \label{alg:FirmD-core}
    \begin{algorithmic}[1]
        \Require{A directed ML graph $G = (V, E, L)$, and $\lambda \in \mathbb{N}^+$}
        \Ensure{FirmD-Core index $(k, r)$ for each pair of $k$ and $r$.}
        \ForAll{$k = 1, 2, \dots, |V|$}
            \State re-compute degrees
            \State $S \leftarrow \{v | v \in V, \text{and} \:\: \text{Top-}\lambda(\text{deg}^+(v)) \geq k \}, T \leftarrow V$
            \ForAll{$v \in S$}
                \State $I^+[v] \leftarrow \text{Top-$\lambda$}(\text{deg}^{H,+}(v))$
            \EndFor
            \ForAll{$u \in T$}
                \State $I^-[u] \leftarrow \text{Top-$\lambda$}(\text{deg}^{H,-}(u))$
                \State $B[I^-[u]] \leftarrow B[I^-[u]] \cup \{u\}$
            \EndFor
            \ForAll{$r = 1, 2, \dots, |V|$}
                \While{$B[r] \neq \emptyset$}
                    \State pick and remove $v$ from $B[r]$ and $T$
                    \State core$^k_\lambda(v) \leftarrow r$, $N \leftarrow \emptyset$
                    \ForAll{$(u, v, \ell) \in E$ s.t. $u \in S$ and $I^+[u] \geq k$}
                        \State $\text{deg}^{H,+}_\ell(u) \leftarrow \text{deg}^{H,+}_\ell(u) - 1$
                        \If{$\text{deg}^{H,+}_\ell(u) = I^+[u]$ - 1}
                            \State $N \leftarrow N \cup \{u\}$
                        \EndIf
                    \EndFor
                    \ForAll{$u \in N$}
                        \State update $I^+[u]$
                        \If{$I^+[u] < k$}
                            \State remove u from $S$
                            \ForAll{$(u, v', \ell) \in E$ and $I^-[v'] \geq r$}
                                \State $\text{deg}^{H,-}_\ell(v') \leftarrow \text{deg}^{H,-}_\ell(v') - 1$
                                \If{$\text{deg}^{H,-}_\ell(v') = I^-[v']$ - 1}
                                    \State remove $v'$ from $B[I^-[v']]$ 
                                    \State update $I^-[v']$ 
                                    \State $B[I^-[v']] \leftarrow B[I^-[v']] \cup \{v'\}$
                                \EndIf
                            \EndFor
                        \EndIf
                    \EndFor
                \EndWhile
            \EndFor
        \EndFor
    \end{algorithmic}
\end{algorithm}

\vspace{-2ex} 
\subsection{FirmD-Core Decomposition}
\label{sec:alg-FirmDCore}
A naive approach for FirmD-Core decomposition is to fix $k$ and $r$, and then initialize $(S, T)$-induced subgraph to $S = T = V$ and then recursively remove nodes whose outdegree in less than $\lambda$ layers are $\geq k$ from $S$, and nodes with indegree in less than $\lambda$ layers are $\geq r$ from $T$. Then we can do this  for all possible values of $k, r \in \mathbb{N}^+$. This algorithm is inefficient since each core is computed starting from the scratch. To address this, we use the containment property of FirmD-Core. If we fix the value of $k$, the FirmD-Cores for different values of $r$ are nested. Accordingly, we can repeat the above algorithm but assign the $(k, r)$-core index to each removed node during $r$-th iteration.

Algorithm \ref{alg:FirmD-core} shows the process of FirmD-Core decomposition. In this algorithm, $H$ refers to $(S, T)-$induced subgraph. In the first loop, we iterate over all possible values of $k$ and when $k$ is fixed, first we initialize $(S, T)$-induced subgraph with $S = T = V$. Lines 6-8 of the algorithm initialize the buckets such that in bucket $i$, all nodes with Top$-\lambda$ in-degree equal to $i$ are stored. Then similar to Algorithm~\ref{alg:FirmCore}, for each $r$, it removes nodes with in-degree less than $r$ from $T$ and then updates the degrees of its neighbors in $S$ and removes them from $S$ if their current out-degree is less than $k$.  

\noindent
\textbf{Computational complexity.}
While $k$ is fixed, the complexity of Algorithms \ref{alg:FirmCore} and \ref{alg:FirmD-core} are similar, so the time complexity of Algorithm \ref{alg:FirmD-core} is $\mathcal{O}(k_{\max}(|V|\lambda \log |L| + |V||L| + |E||L|))$, where $k_{\max}$ is the maximum Top-$\lambda$ out-degree of any node.

\vspace{-1ex}
\section{Multilayer Densest Subgraph}
\label{sec:Densest_Subgraph}
In this section, we show how to use the FirmCore and FirmD-Core concepts to devise an efficient approximation algorithm for the densest subgraph in ML graphs and directed ML graphs.

\vspace*{-1ex}
\subsection{Problem Definition}
\label{sec:MDS-definition}
We first recall the densest subgraph problem in undirected ML graphs proposed in \cite{MLcore}, and then extend it to  a formulation of the densest subgraph problem in directed ML networks. 

In contrast with single-layer graphs, the density objective function in ML graphs  provides a trade-off between the number of layers and the edge density. Accordingly, in the following definition, there is a parameter $\beta$ to control the importance of edge density in comparison to the number of layers exhibiting the~high~density.

\begin{problem} [Multilayer Densest Subgraph]\cite{MLcore, ml-core-journal}
Given an undirected ML graph $G=(V,E,L)$, a positive real number $\beta$, and a real-valued function $\rho : 2^V \rightarrow \mathbb{R}^+$ defined as:
\begin{equation} \label{eq1}
\rho(S) = \max_{\hat{L} \subseteq L} \min_{\ell \in \hat{L}} \frac{|E_\ell[S]|}{|S|} |\hat{L}|^\beta,
\end{equation}
find a subset of vertices $S^* \subseteq V$ that maximizes $\rho$ function, i.e., 
\begin{equation} \label{eq2}
S^* = \arg \max_{S \subseteq V} \rho(S).
\end{equation}
\end{problem}

Inspired by Problem 3 and the definition of the densest subgraph problem in directed single-layer graphs \cite{DirectedDensity}, we define the directed multilayer densest subgraph problem as follows.

\begin{problem} [Directed Multilayer Densest Subgraph]
Given a directed ML graph $G=(V,E,L)$, a positive real number $\beta$, and sets of vertices $S,T \subseteq V$, which are not necessarily disjoint, let $E[S, T]$ be the set of edges linking vertices in $S$ to the vertices in $T$. Given a real-valued function $\rho : 2^V \times 2^V \rightarrow \mathbb{R}^+$ defined as:
\begin{equation} \label{eq3}
\rho(S, T) = \max_{\hat{L} \subseteq L} \min_{\ell \in \hat{L}} \frac{|E_\ell(S, T)|}{\sqrt{|S| |T|}} |\hat{L}|^\beta,
\end{equation}
find subsets of vertices $S^*, T^* \subseteq V$ that maximize $\rho(.)$ function, i.e., 
\begin{equation} \label{eq4}
(S^*, T^*) = \arg \max_{S, T \subseteq V} \rho(S, T).
\end{equation}
\end{problem}

\subsection{Algorithms}
\label{sec:MDS-algorithm}
\subsubsection{Undirected Multilayer Densest Subgraph} 
Finding the densest subgraph in ML networks is NP-hard \cite{ml-core-journal}, and to the best of our knowledge, there is no known polynomial-time approximation algorithm for this problem. The state-of-the-art approximation algorithm for the ML densest subgraph problem is based on the $\textbf{k}$-core decomposition \cite{MLcore}. It computes the ML core decomposition of the input graph $G$ and, among the cores found, it returns the one that maximizes the objective function $\rho(.)$. Although it achieves a $\frac{1}{2|L|^{\beta}}$-approximation, it is exponential-time, and thus cannot scale to  large graphs. More precisely, the computational complexity of the ML core decomposition algorithm in \cite{ml-core-journal} is $\mathcal{O}(|V|^{|L|} (|E| + |V||L|))$ in the worst case, which makes their approximation algorithm impractical for large networks even for a small number of layers. Here we propose the first polynomial-time approximation algorithm with a time complexity of $\mathcal{O}(|E| |L|^2 + |V||L|\log |L|)$ that has an approximation guarantee of $\frac{\psi_\beta}{2|L|^{\beta + 1}}$, where for a given $\beta$, $\psi_\beta$ is a function of the network topology, which ranges from $1$ to a number very close to $|L|^{\beta + 1}$. Therefore, not only is our algorithm much faster but also in many cases achieves a better quality.

\begin{algorithm}[t]
    \small
    \caption{FC-Approx}
    \label{alg:Approx1}
    \begin{algorithmic}[1]
        \Require{A multilayer graph $G = (V, E, L)$ and $\beta \in \mathbb{R}^+$}
        \Ensure{The densest FirmCore $C^*$ of $G$}
        \State $\Omega \leftarrow$ FirmCoreDecomposition(G)
        \State \Return $\arg \max_{C_{k, \lambda} \in \Omega} \rho(C_{k, \lambda})$
    \end{algorithmic}
\end{algorithm}

Algorithm \ref{alg:Approx1} is devised based on the FirmCore decomposition of a graph. The procedure starts with computing all $(k, \lambda)$-FirmCores for all possible values of $k$ and $\lambda$, and returns the FirmCore that maximizes the $\rho(.)$ function. The computational complexity is essentially dominated by that of FirmCore Decomposition.

\noindent
\textbf{Approximation Guarantees.} We first establish some lemmas based on the definition of FirmCore and properties of objective function $\rho(.)$, which we subsequently use to prove the approximation guarantee for Algorithm \ref{alg:Approx1}. Let  $S^*_{\text{SL}}$ denote the densest subgraph among all single-layer densest subgraphs, i.e., $S^*_{\text{SL}} = \arg \max_{S\subseteq V} \max_{\ell \in L} \frac{|E_\ell[S]|}{|S|}$, and let $\ell^*$  denote the layer where $S^*_{\text{SL}}$  exhibits its largest single-layer density. For $S\subseteq V$, let $\mu(S, \ell)$ denote the minimum degree of the (induced) subgraph $S$ in layer $\ell$. Furthermore, let $C^+$ be the non-empty $(k^+, \lambda^+)$-FirmCore of $G$ with maximum value of $\lambda^+$ such that $k^+ \geq \mu(S^*_{\text{SL}}, \ell^*)$. It is easy to see that $C^+$ is well defined. This FirmCore will help us to provide a connection between the upper bound on the density of the optimal solution, which is based on $S^*_{\text{SL}}$, and the lower bound on the density of the output of Algorithm~\ref{alg:Approx1}. Finally, let $C^*$ and $S^*$  denote the output of Algorithm \ref{alg:Approx1} and the optimal solution, respectively.

\setcounter{theorem}{0}
\begin{lemma}\label{lemma1}
Let $C$ be the $(k, \lambda)$-FirmCore of $G$, we have:
\begin{equation*}
\rho(C) \geq \frac{k}{2 |L|} \underset{\xi \in \mathbb{Z}, 0\leq \xi < \lambda}{\max} (\lambda - \xi) (\xi + 1)^{\beta}.
\end{equation*}

\end{lemma}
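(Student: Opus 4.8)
The plan is to exploit that $\rho(C)$ is a maximum over all layer subsets $\hat L$, so to obtain a lower bound it suffices to exhibit, for each admissible $\xi$, a single good choice of $\hat L$ of size $\xi+1$ and estimate the resulting value $\min_{\ell\in\hat L}\frac{|E_\ell[C]|}{|C|}\,(\xi+1)^\beta$. First I would translate the FirmCore condition into a counting statement. For each layer $\ell$ let $m_\ell=|\{v\in C:\text{deg}^C_\ell(v)\ge k\}|$ be the number of vertices that are ``$k$-satisfied'' in layer $\ell$. Definition~\ref{FirmCore} says every $v\in C$ is $k$-satisfied in at least $\lambda$ layers, so double counting the $k$-satisfied (vertex, layer) pairs gives $\sum_{\ell\in L} m_\ell\ge \lambda|C|$. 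Within a single layer, the handshake bound $2|E_\ell[C]|=\sum_{v\in C}\text{deg}^C_\ell(v)\ge k\,m_\ell$ converts the count $m_\ell$ into an edge-density lower bound $\frac{|E_\ell[C]|}{|C|}\ge \frac{k\,m_\ell}{2|C|}$.

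The crux is to guarantee that \emph{many} layers simultaneously have large $m_\ell$, because the inner $\min$ in $\rho$ is governed by the worst layer in $\hat L$. Here I would sort the layers so that $m_{(1)}\ge m_{(2)}\ge\cdots\ge m_{(|L|)}$ and take $\hat L$ to be the top $\xi+1$ of them, whose minimum value is $m_{(\xi+1)}$. To lower bound $m_{(\xi+1)}$ I would balance two facts: every $m_\ell$ is capped by $m_\ell\le|C|$, while the total is at least $\lambda|C|$. Splitting $\sum_j m_{(j)}=\sum_{j\le\xi}m_{(j)}+\sum_{j>\xi}m_{(j)}$ and bounding the first sum by $\xi|C|$ and the second by $(|L|-\xi)\,m_{(\xi+1)}$ yields $(\lambda-\xi)|C|\le(|L|-\xi)\,m_{(\xi+1)}$, hence $m_{(\xi+1)}\ge\frac{(\lambda-\xi)|C|}{|L|-\xi}$. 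Note $\xi+1\le\lambda\le|L|$, so this index is always valid. This averaging step, reconciling the per-layer cap against the global budget $\lambda|C|$, is the main obstacle; everything else is mechanical.

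Combining the pieces, for the chosen $\hat L$ each of its layers satisfies $\frac{|E_\ell[C]|}{|C|}\ge\frac{k\,m_{(\xi+1)}}{2|C|}\ge\frac{k(\lambda-\xi)}{2(|L|-\xi)}\ge\frac{k(\lambda-\xi)}{2|L|}$, using $|L|-\xi\le|L|$ in the last step. Since $|\hat L|=\xi+1$, this gives $\rho(C)\ge\frac{k(\lambda-\xi)}{2|L|}(\xi+1)^\beta$. As $\xi\in\{0,\dots,\lambda-1\}$ was arbitrary, taking the maximum over $\xi$ delivers the claimed bound. I would also sanity-check the degenerate cases ($k=0$ makes the inequality trivial, and an empty FirmCore is excluded since $\rho$ is evaluated on a nonempty core) to ensure the estimate is vacuously or genuinely valid throughout.
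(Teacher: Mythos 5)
Your proof is correct and follows essentially the same route as the paper's: both count the $k$-satisfied (vertex, layer) pairs to get $\sum_\ell m_\ell \ge \lambda|C|$, convert $m_\ell$ to a per-layer density via the handshake bound, and exhibit $\xi+1$ layers each of density at least $\tfrac{k(\lambda-\xi)}{2|L|}$. The only difference is bookkeeping — you sort the $m_\ell$ and apply one averaging inequality (even obtaining the slightly sharper $m_{(\xi+1)}\ge \tfrac{(\lambda-\xi)|C|}{|L|-\xi}$ before relaxing), whereas the paper iterates the pigeonhole argument by removing one layer at a time.
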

\begin{proof}
By definition, each node $v \in C$ has at least $k$ neighbors in at least $\lambda$ layers, so based on the pigeonhole principle, there exists a layer such that there are $\geq \frac{\lambda |C|}{|L|}$ nodes that have $\geq k$ neighbors. Let $\ell'$ denote  this layer, so we have:
$$\frac{|E_{\ell'} [C]|}{|C|} \geq \frac{\frac{\lambda |C|}{|L|} \times k}{2|C|} = \frac{\lambda k}{2|L|}.$$
Now,  ignoring this layer, and exploiting  the definition of $C$, and re-using the pigeonhole principle, we can conclude that there exists a layer such that there are $\geq \frac{(\lambda - 1) |C|}{|L|}$ nodes that have $\geq k$ neighbors. So there is a subset of $L$ with two layers such that each layer has density at least $\frac{(\lambda - 1) k}{2|L|}$. By iterating this process, we can conclude that $\exists \Tilde{L} \subseteq L$, such that: $$\rho(C) \geq \frac{(\lambda - |\Tilde{L}| + 1)k}{2|L|} |\Tilde{L}|^\beta.$$
The right hand side of the above inequality is a function of $|\Tilde{L}|$, and since the above inequality is valid for each arbitrary integer $1 \leq |\Tilde{L}| \leq \lambda$, $\rho(C)$ is not less than the maximum of this function exhibited by these values.
\end{proof}

\begin{lemma}\label{lemma2}
$\rho(C^*) \geq \frac{\psi_\beta}{2 |L|}  \mu(S^*_{\text{SL}}, \ell^*)$, where $\psi_{\beta} =\underset{\underset{0 \leq \xi <\lambda^+}{\xi \in \mathbb{Z}}}{max} (\lambda^+ - \xi)~(\xi~+~1)^\beta$.
\end{lemma}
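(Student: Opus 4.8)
The plan is to combine Lemma~\ref{lemma1} with the optimality of $C^*$ as the densest FirmCore, so that the single-layer quantity $\mu(S^*_{\text{SL}}, \ell^*)$ enters through a carefully chosen reference FirmCore, namely $C^+$. Since Algorithm~\ref{alg:Approx1} returns the FirmCore maximizing $\rho(\cdot)$ over the entire decomposition $\Omega$, we have $\rho(C^*) \geq \rho(C)$ for \emph{every} FirmCore $C$ of $G$; in particular $\rho(C^*) \geq \rho(C^+)$. Thus it suffices to lower-bound $\rho(C^+)$ and to substitute the defining inequality $k^+ \geq \mu(S^*_{\text{SL}}, \ell^*)$.

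First I would confirm that $C^+$ is well-defined, i.e., that a non-empty FirmCore whose order is at least $\mu(S^*_{\text{SL}}, \ell^*)$ actually exists. By definition of $\mu$, every node $v \in S^*_{\text{SL}}$ satisfies $\text{deg}^{S^*_{\text{SL}}}_{\ell^*}(v) \geq \mu(S^*_{\text{SL}}, \ell^*)$, so $S^*_{\text{SL}}$ meets the FirmCore condition of Definition~\ref{FirmCore} with parameters $k = \mu(S^*_{\text{SL}}, \ell^*)$ and $\lambda = 1$, witnessed by the single layer $\ell^*$. By maximality, $S^*_{\text{SL}}$ is contained in the $(\mu(S^*_{\text{SL}}, \ell^*),\, 1)$-FirmCore, which is therefore non-empty. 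Hence the family of non-empty FirmCores of order at least $\mu(S^*_{\text{SL}}, \ell^*)$ is non-empty, and selecting the member with the largest $\lambda$ yields $C^+$, the $(k^+, \lambda^+)$-FirmCore with $k^+ \geq \mu(S^*_{\text{SL}}, \ell^*)$.

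Next I would apply Lemma~\ref{lemma1} directly to $C^+$. Since $C^+$ is the $(k^+, \lambda^+)$-FirmCore, Lemma~\ref{lemma1} gives
\[
\rho(C^+) \geq \frac{k^+}{2|L|} \max_{\xi \in \mathbb{Z},\, 0 \leq \xi < \lambda^+} (\lambda^+ - \xi)(\xi+1)^\beta = \frac{k^+}{2|L|}\,\psi_\beta,
\]
where the final equality is exactly the definition of $\psi_\beta$. Substituting $k^+ \geq \mu(S^*_{\text{SL}}, \ell^*)$ and chaining with $\rho(C^*) \geq \rho(C^+)$ then yields the claimed bound.

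There is no serious technical obstacle here: the lemma is essentially a bookkeeping step that transfers the bound of Lemma~\ref{lemma1} onto the single-layer quantity $\mu(S^*_{\text{SL}}, \ell^*)$. The only point requiring genuine care is the well-definedness of $C^+$, i.e., the existence of a non-empty FirmCore of sufficiently high order; this is precisely what the inclusion of $S^*_{\text{SL}}$ in the $(\mu(S^*_{\text{SL}}, \ell^*),\, 1)$-FirmCore establishes. Everything else is immediate from the optimality of $C^*$ over $\Omega$ together with a direct invocation of Lemma~\ref{lemma1}.
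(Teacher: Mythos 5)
Your proposal is correct and follows essentially the same route as the paper: chain $\rho(C^*) \geq \rho(C^+)$, apply Lemma~\ref{lemma1} to $C^+$, and substitute $k^+ \geq \mu(S^*_{\text{SL}}, \ell^*)$. The only addition is your explicit verification that $C^+$ is well defined via the non-emptiness of the $(\mu(S^*_{\text{SL}}, \ell^*), 1)$-FirmCore, a point the paper asserts without proof in the surrounding text.
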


\vspace{-5ex}
\begin{lemma}\label{lemma3}
$\rho(S^*) \leq \mu(S^*_{\text{SL}}, \ell^*) |L|^\beta.$
\end{lemma}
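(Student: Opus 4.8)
The plan is to strip the optimal multilayer density down to a single ordinary single-layer density and then invoke the standard minimum-degree characterization of the densest subgraph. First I would fix a layer set $\hat{L}^* \subseteq L$ that realizes the outer maximum in $\rho(S^*)$, so that $\rho(S^*) = \min_{\ell \in \hat{L}^*} \frac{|E_\ell[S^*]|}{|S^*|}\,|\hat{L}^*|^\beta$. Since $\hat{L}^* \subseteq L$ and $\beta > 0$, the map $x \mapsto x^\beta$ is increasing and hence $|\hat{L}^*|^\beta \leq |L|^\beta$. This immediately peels off the target factor $|L|^\beta$ and reduces the claim to bounding the minimum single-layer density $\min_{\ell \in \hat{L}^*} \frac{|E_\ell[S^*]|}{|S^*|}$ by $\mu(S^*_{\text{SL}}, \ell^*)$.

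Next I would drop the inner minimum and retain a single arbitrary layer $\ell \in \hat{L}^*$ (the set is nonempty), which gives $\min_{\ell' \in \hat{L}^*} \frac{|E_{\ell'}[S^*]|}{|S^*|} \leq \frac{|E_\ell[S^*]|}{|S^*|}$. The right-hand side is just the density of one particular subgraph $S^*$ in one particular layer, so it is at most the global maximum single-layer density taken over all subgraphs and all layers. By the very definitions of $S^*_{\text{SL}}$ and $\ell^*$, that global maximum equals $\frac{|E_{\ell^*}[S^*_{\text{SL}}]|}{|S^*_{\text{SL}}|}$; in particular, a short chain of inequalities shows $S^*_{\text{SL}}$ is a densest subgraph of the single-layer graph $G_{\ell^*}$, a fact I will need below.

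The crux is the final inequality $\frac{|E_{\ell^*}[S^*_{\text{SL}}]|}{|S^*_{\text{SL}}|} \leq \mu(S^*_{\text{SL}}, \ell^*)$. For this I would use the classical minimum-degree property of densest subgraphs: writing $d := \frac{|E_{\ell^*}[S^*_{\text{SL}}]|}{|S^*_{\text{SL}}|}$, if some $v \in S^*_{\text{SL}}$ had layer-$\ell^*$ degree $\text{deg}^{S^*_{\text{SL}}}_{\ell^*}(v) < d$, then deleting $v$ would produce the strictly larger ratio $\frac{|E_{\ell^*}[S^*_{\text{SL}}]| - \text{deg}^{S^*_{\text{SL}}}_{\ell^*}(v)}{|S^*_{\text{SL}}| - 1} > d$, contradicting the maximality of $S^*_{\text{SL}}$ as a densest subgraph in layer $\ell^*$. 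Hence every vertex has layer-$\ell^*$ degree at least $d$, i.e. $\mu(S^*_{\text{SL}}, \ell^*) \geq d$. Chaining the three inequalities yields $\rho(S^*) \leq \mu(S^*_{\text{SL}}, \ell^*)\,|L|^\beta$.

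I expect the only real obstacle to be justifying this last minimum-degree step cleanly, since it is the sole place where the optimality (not merely feasibility) of $S^*_{\text{SL}}$ enters, and it is essential that $\ell^*$ be the maximizing layer so that $S^*_{\text{SL}}$ is genuinely a densest single-layer subgraph there. Everything else is routine, and notably no structural property of the optimizer $S^*$ beyond the definition of $\rho$ and the containment $\hat{L}^* \subseteq L$ is required.
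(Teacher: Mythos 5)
Your proposal is correct and follows essentially the same route as the paper: both arguments combine the classical minimum-degree property of the single-layer densest subgraph (removing the minimum-degree vertex cannot increase the density of $S^*_{\text{SL}}$ in layer $\ell^*$, giving $\mu(S^*_{\text{SL}}, \ell^*) \geq |E_{\ell^*}[S^*_{\text{SL}}]|/|S^*_{\text{SL}}|$) with the observation that $\rho(S^*)$ is bounded above by the maximum single-layer density of $S^*$ times $|L|^\beta$, which in turn is at most the global single-layer optimum attained by $(S^*_{\text{SL}}, \ell^*)$. The only cosmetic difference is that you phrase the minimum-degree step as a proof by contradiction while the paper states the deletion inequality directly; the content is identical.
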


We now establish the approximation guarantee of Algorithm~\ref{alg:Approx1}. 
 \setcounter{theorem}{0}
\begin{theorem}\label{theorem1}
Let $\psi_{\beta} =\underset{\xi \in \mathbb{Z}, 0 \leq \xi  <\lambda^+}{max} \:   (\lambda^+ - \xi) (\xi + 1)^\beta$, we have: 
\[
\rho(C^*) \geq \frac{\psi_\beta}{2 |L|^{\beta + 1}} \rho(S^*).
\]

\end{theorem}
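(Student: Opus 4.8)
The plan is to obtain the theorem as the composition of the two bounds already isolated in Lemmas \ref{lemma2} and \ref{lemma3}, which are linked through the common quantity $\mu(S^*_{\text{SL}}, \ell^*)$. Lemma \ref{lemma2} lower-bounds the density of the algorithm's output by $\rho(C^*) \geq \frac{\psi_\beta}{2|L|}\mu(S^*_{\text{SL}}, \ell^*)$, while Lemma \ref{lemma3} upper-bounds the optimum by $\rho(S^*) \leq \mu(S^*_{\text{SL}}, \ell^*)|L|^\beta$, i.e.\ $\mu(S^*_{\text{SL}}, \ell^*) \geq \rho(S^*)/|L|^\beta$. Substituting the latter into the former gives $\rho(C^*) \geq \frac{\psi_\beta}{2|L|}\cdot\frac{\rho(S^*)}{|L|^\beta} = \frac{\psi_\beta}{2|L|^{\beta+1}}\rho(S^*)$, which is exactly the claimed guarantee. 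So the concluding step is a one-line algebraic chaining; the substance lives in the two supporting lemmas, which I sketch next to locate the real difficulty.

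For Lemma \ref{lemma2}, the idea I would use is to exhibit one concrete FirmCore that both contains $S^*_{\text{SL}}$ and has large enough order. Since the induced subgraph on $S^*_{\text{SL}}$ has minimum degree $\mu := \mu(S^*_{\text{SL}}, \ell^*)$ in layer $\ell^*$, every node of $S^*_{\text{SL}}$ has $\geq \mu$ neighbors in at least one layer, so $S^*_{\text{SL}}$ is contained in the $(\mu, 1)$-FirmCore; this certifies that $C^+$ (the non-empty FirmCore with maximal $\lambda^+$ subject to $k^+ \geq \mu$) is well defined. Applying Lemma \ref{lemma1} to $C^+$ and using $k^+ \geq \mu$ yields $\rho(C^+) \geq \frac{k^+}{2|L|}\max_{0\leq\xi<\lambda^+}(\lambda^+-\xi)(\xi+1)^\beta \geq \frac{\psi_\beta}{2|L|}\mu$. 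Because $C^*$ is the FirmCore returned by Algorithm \ref{alg:Approx1}, it maximizes $\rho(\cdot)$ over all FirmCores, so $\rho(C^*) \geq \rho(C^+)$, which delivers the lemma.

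For Lemma \ref{lemma3}, I would lean on the classical fact that a maximum-density subgraph has minimum degree at least its density: a vertex-removal/local-optimality argument shows $\mu(S^*_{\text{SL}}, \ell^*) \geq \frac{|E_{\ell^*}[S^*_{\text{SL}}]|}{|S^*_{\text{SL}}|}$, and since $S^*_{\text{SL}}$ attains the global maximum single-layer density, this value dominates $\frac{|E_\ell[S]|}{|S|}$ for every $S$ and $\ell$. Taking the optimal layer set $\hat{L}^*$ for $S^*$, bounding the minimum over $\hat{L}^*$ by any single term, and using $|\hat{L}^*|^\beta \leq |L|^\beta$, then gives $\rho(S^*) \leq \mu(S^*_{\text{SL}}, \ell^*)|L|^\beta$.

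The main obstacle I anticipate is not the final chaining but establishing Lemma \ref{lemma2} tightly: one must argue carefully that $\lambda^+$ in the definition of $\psi_\beta$ is genuinely the largest layer-threshold at which a FirmCore of order $\geq \mu$ survives, so that $\psi_\beta$ captures the topology-dependent amplification factor rather than collapsing to the trivial value. The pigeonhole bound of Lemma \ref{lemma1} must then be instantiated precisely at this $\lambda^+$, and I would double-check the edge-counting factor of $\tfrac12$ (each edge counted at both endpoints) so that the constants assemble into the exact $\frac{1}{2|L|^{\beta+1}}$ ratio.
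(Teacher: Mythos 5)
Your proposal is correct and follows exactly the paper's route: the theorem is obtained by chaining Lemma~\ref{lemma2} ($\rho(C^*) \geq \frac{\psi_\beta}{2|L|}\mu(S^*_{\text{SL}}, \ell^*)$) with Lemma~\ref{lemma3} ($\rho(S^*) \leq \mu(S^*_{\text{SL}}, \ell^*)|L|^\beta$), and your sketches of the two lemmas (instantiating Lemma~\ref{lemma1} at $C^+$ plus the maximality of the algorithm's output, and the vertex-removal argument showing minimum degree dominates density) match the paper's own proofs of those lemmas.
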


The function $f(\xi) = (\lambda^+ - \xi)(\xi+1)^\beta$ achieves its maximum value at $\xi = \frac{\lambda^+ \beta - 1}{\beta + 1}$, but since we want to maximize $f(.)$ over integers, 
the maximum occurs at $\xi = \floor*{\frac{\lambda^+ \beta - 1}{\beta + 1}}$ or $\xi = \ceil*{\frac{\lambda^+ \beta - 1}{\beta + 1}}$. Notice that a lower bound of $\psi_\beta$ can be obtained by setting  $\xi = \lambda^+ - 1$ or $\xi = 0$, yielding  $\psi_\beta \geq \max\{(\lambda^{+})^\beta, \lambda^+\}$. The following example  illustrates the  approximation factor of FC-Approx (Algorithm~\ref{alg:Approx1}) in relation to  the state-of-the-art, Core-Approx \cite{MLcore}.

\begin{example}
In Figure \ref{fig:example}, $|L| = 3$, blue nodes are $S^*_{\text{SL}}$ and $\ell^* = 1$. It is easy to see that $\mu(S^*_{\text{SL}}, \ell^*) = 4$ and $\lambda^+ = 2$. For $\beta = 1, 2, 3$, FC-Approx (resp. Core-Approx) achieves an approximation factor of $\frac{1}{9}, \frac{2}{27}, \frac{4}{81}$ (resp.  $\frac{1}{6}, \frac{1}{18}, \frac{1}{54}$). For $\beta \leq 1$, FC-Approx has a slightly worse approximation factor, while for $\beta > 1$, it guarantees better quality. 
How much worse (resp. better) can FC-Approx get compared to Core-Approx? To see that, consider 
removing any edge from the blue subgraph in third layer (``Neural Network''). This leads to $\lambda^+ = 1$ and $\psi_{\beta}=1$, which is the worst case of Algorithm \ref{alg:Approx1}, leading to an approximation factor of $\frac{1}{2\cdot3^{\beta+1}}$ as against $\frac{1}{2\cdot3^{\beta}}$  for Core-Approx. On the other hand, consider instead removing any edge from the blue subgraph in the first layer (``Algorithm''). This leads to $\lambda^+ = |L| = 3$, which is the best case of Algorithm \ref{alg:Approx1} and in this case, when $\beta \geq 1$ (resp. $\beta \leq 1$), $\psi_{\beta} \geq 3^{\beta}$ (resp. $\psi_\beta \geq 3$), leading to an approximation factor of  $\frac{1}{2\cdot3}$  (resp. $\frac{1}{2\cdot3^{\beta}}$) as against $\frac{1}{2\cdot3^{\beta}}$  for Core-Approx. Thus, the guaranteed approximation quality of FC-Approx can be up to $3\times$ worse than Core-Approx in the worst case. In the best case, when $\beta > 1$ (resp. $\beta \leq 1$), it can be $3^{\beta-1}\times$ better than  (resp. matches) that of Core-Approx. 
\end{example}

\subsubsection{Directed Multilayer Densest Subgraph}
\label{sec:FDC-Approx}
Similar to the undirected graphs, FDC-Approx Algorithm computes all $(k, r, \lambda)$-FirmD-cores of the graph and returns the FirmD-core that maximizes the objective function (pseudocode in Appendix~\ref{sec:FDC-Approx-appendix}).

\noindent
\textbf{Approximation Guarantees.}  Let $G = (V, E, L)$ be a directed ML graph, $(S^*, T^*)$ be the optimal solution, $(S^*_{\text{SL}}, T^*_{\text{SL}})$ be the densest subgraph among all single-layer densest subgraphs, and $\ell^*$ be its corresponding layer. The notion of $[x, y]$-core in single-layer networks~\cite{xy-core} corresponds to the Definition~\ref{FirmDCore}, when $L = 1$ and $\lambda = 1$ and $[x^*, y^*]$ is called the maximum cn-pair if $x^* \cdot y^*$ achieves the maximum value among all the possible non-empty $[x, y]$-cores. We use $[x^*, y^*]$ to denote the maximum cn-pair in $\ell^*$. Finally, $G[\hat{S}, \hat{T}]$ refers to the non-epmty $(\hat{k}, \hat{r}, \hat{\lambda})$-FirmD-Core with maximum $\hat{\lambda}$ that $\hat{k}\hat{r} \geq x^*y^*$. It is easy to see that this FirmD-Core is well defined.

\setcounter{theorem}{3}
\begin{lemma}
\label{lem:lemma5}
Given an arbitrary $(k, r, \lambda)$-FirmD-core, $H = G[S, T]$:
\begin{equation*}
    \rho(S, T) \geq \underset{\xi \in \mathbb{Z}, 0\leq \xi < \lambda}{\max} \frac{(\lambda - \xi) (\xi + 1)^{\beta}}{|L|} \times \max \{k \sqrt{a}, \frac{r}{\sqrt{a}}\},
\end{equation*}
where $a = \frac{|S|}{|T|}$.
\end{lemma}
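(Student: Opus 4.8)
The plan is to adapt the proof of Lemma~\ref{lemma1} to the directed setting by running two symmetric counting arguments—one driven by the out-degree condition on $S$ and one by the in-degree condition on $T$—and then combining them by taking a maximum. The quantity I ultimately need to lower-bound is $\frac{|E_\ell(S,T)|}{\sqrt{|S||T|}}$ for a well-chosen family of layers $\hat{L}$, after which the factor $|\hat{L}|^\beta$ and the outer optimization over $\xi$ follow exactly as in the undirected case.

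First I would exploit the out-degree side. By Definition~\ref{FirmDCore}, every $v \in S$ satisfies $\text{deg}^{H,+}_\ell(v) \geq k$ in at least $\lambda$ layers, so the number of (vertex, layer) incidences of this type is at least $\lambda |S|$. A pigeonhole argument then yields a layer $\ell'$ in which at least $\frac{\lambda |S|}{|L|}$ vertices of $S$ have out-degree $\geq k$. Crucially, and unlike the undirected case, here summing these out-degrees counts each directed edge at most once (each edge has a unique tail), so there is no factor of $2$: $|E_{\ell'}(S,T)| \geq \frac{\lambda |S|}{|L|}\,k$. Dividing by $\sqrt{|S||T|}$ and writing $a = |S|/|T|$ converts the $|S|$ in the numerator into $\sqrt{a}$, giving $\frac{|E_{\ell'}(S,T)|}{\sqrt{|S||T|}} \geq \frac{\lambda k}{|L|}\sqrt{a}$.

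Next I would iterate the pigeonhole step to extract several dense layers at once. Letting $n_\ell$ be the number of vertices of $S$ with out-degree $\geq k$ in layer $\ell$, I have $\sum_\ell n_\ell \geq \lambda|S|$ and $n_\ell \leq |S|$; sorting the $n_\ell$ in decreasing order and discarding the largest $\xi$ of them leaves at least $(\lambda-\xi)|S|$ incidences over at most $|L|$ layers, so the $(\xi+1)$-st largest value is at least $\frac{(\lambda-\xi)|S|}{|L|}$. Taking $\hat{L}$ to be these top $\xi+1$ layers, each contributes density at least $\frac{(\lambda-\xi)k}{|L|}\sqrt{a}$, whence $\rho(S,T) \geq \frac{(\lambda-\xi)(\xi+1)^\beta}{|L|}\,k\sqrt{a}$. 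The identical argument applied to the in-degree condition on $T$—where the analogous edge count carries a $\frac{|T|}{\sqrt{|S||T|}} = \frac{1}{\sqrt{a}}$ factor—gives $\rho(S,T) \geq \frac{(\lambda-\xi)(\xi+1)^\beta}{|L|}\,\frac{r}{\sqrt{a}}$. Since both bounds hold for every admissible $\xi$, taking the pointwise maximum of the two and then the maximum over $\xi$ yields the claim.

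The main obstacle I anticipate is making the iterated pigeonhole rigorous while preserving the clean denominator $|L|$: the informal ``ignore one layer and repeat'' phrasing of Lemma~\ref{lemma1} should be replaced by the sorting/averaging bound above to guarantee that all $\xi+1$ chosen layers \emph{simultaneously} clear the threshold $\frac{(\lambda-\xi)|S|}{|L|}$, which is precisely what licenses pulling $\min_{\ell \in \hat{L}}$ out of the density expression. A secondary point needing care is the asymmetric normalization $\sqrt{|S||T|}$: I must check that it is exactly this square-root form that lets the out-side and in-side estimates be written uniformly as $k\sqrt{a}$ and $r/\sqrt{a}$, so that their maximum is the natural combined bound stated in the lemma rather than a weaker one.
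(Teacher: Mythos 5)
Your proposal is correct and follows essentially the same route as the paper's proof: two symmetric pigeonhole arguments (one on the out-degree condition for $S$, one on the in-degree condition for $T$), each extracting $\xi+1$ layers of guaranteed density, normalized by $\sqrt{|S||T|}$ to produce the $k\sqrt{a}$ and $r/\sqrt{a}$ terms, then combined by a maximum over the two sides and over $\xi$. Your sorting/averaging formulation of the iterated pigeonhole is a cleaner and more rigorous rendering of the paper's ``ignore a layer and repeat'' step, and your observation that directed edges are counted once (hence no factor of $2$) matches the paper's bound exactly.
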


 \setcounter{theorem}{1}
\begin{theorem}\label{theorem2}
Let $(S^+, T^+)$ represent the output of FDC-Approx, then
$\rho(S^+, T^+) \geq  \frac{\psi_\beta}{2|L|^{\beta + 1}} \rho(S^*, T^*)$, where $\psi_{\beta} =\underset{\xi \in \mathbb{Z}, 0 \leq \xi <\hat{\lambda}}{max} \: (\hat{\lambda} - \xi) (\xi + 1)^\beta$.
\end{theorem}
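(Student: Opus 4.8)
The plan is to mirror the structure of the proof of Theorem~\ref{theorem1}, replacing the single-layer quantity $\mu(S^*_{\text{SL}}, \ell^*)$ by the geometric mean $\sqrt{x^* y^*}$ of the maximum cn-pair in $\ell^*$, and bridging a lower bound on the output density and an upper bound on the optimum through this quantity. Concretely, I would establish the two directed analogues of Lemmas~\ref{lemma2} and~\ref{lemma3}, namely a lower bound $\rho(S^+, T^+) \geq \frac{\psi_\beta}{|L|}\sqrt{x^* y^*}$ and an upper bound $\rho(S^*, T^*) \leq 2|L|^\beta \sqrt{x^* y^*}$. Chaining these immediately yields
\[
\rho(S^+, T^+) \geq \frac{\psi_\beta}{|L|}\sqrt{x^* y^*} \geq \frac{\psi_\beta}{|L|}\cdot \frac{\rho(S^*, T^*)}{2|L|^\beta} = \frac{\psi_\beta}{2|L|^{\beta+1}}\rho(S^*, T^*),
\]
which is exactly the claimed bound.

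For the lower bound I would apply Lemma~\ref{lem:lemma5} to the FirmD-Core $G[\hat{S}, \hat{T}]$, whose parameters satisfy $\hat{k}\hat{r} \geq x^* y^*$ by construction. Since $\max\{\hat{k}\sqrt{\hat{a}}, \hat{r}/\sqrt{\hat{a}}\}$ (with $\hat{a} = |\hat{S}|/|\hat{T}|$) does not depend on $\xi$, it factors out of the maximum, giving $\rho(\hat{S}, \hat{T}) \geq \frac{\psi_\beta}{|L|}\max\{\hat{k}\sqrt{\hat{a}}, \hat{r}/\sqrt{\hat{a}}\}$ with $\psi_\beta = \max_{0 \leq \xi < \hat{\lambda}}(\hat{\lambda} - \xi)(\xi+1)^\beta$. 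A one-line AM--GM step bounds the maximum of the two terms below by their geometric mean $\sqrt{\hat{k}\hat{r}}$, and $\hat{k}\hat{r} \geq x^* y^*$ then gives $\rho(\hat{S}, \hat{T}) \geq \frac{\psi_\beta}{|L|}\sqrt{x^* y^*}$. Because FDC-Approx returns the FirmD-Core maximizing $\rho$, we have $\rho(S^+, T^+) \geq \rho(\hat{S}, \hat{T})$, which completes this half.

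For the upper bound I would fix the layer-set $\hat{L}^*$ attaining the outer maximum in $\rho(S^*, T^*)$ and note that for any single $\ell \in \hat{L}^*$ the inner minimum is no larger than the $\ell$-th term, so $\rho(S^*, T^*) \leq \frac{|E_\ell(S^*, T^*)|}{\sqrt{|S^*||T^*|}}|\hat{L}^*|^\beta \leq d^*\,|L|^\beta$, where $d^*$ is the maximum single-layer directed density over all layers (attained by $(S^*_{\text{SL}}, T^*_{\text{SL}})$ in $\ell^*$). It then remains to relate $d^*$ to the maximum cn-pair. Here I would invoke the $[x,y]$-core theory of Ma et al.~\cite{xy-core}: every $[x,y]$-core has directed density at least $\sqrt{xy}$ (each source has out-degree $\geq x$ and each sink in-degree $\geq y$, so $|E(S,T)| \geq \max\{x|S|, y|T|\}$ and hence density $\geq \max\{x\sqrt{a}, y/\sqrt{a}\} \geq \sqrt{xy}$), while the greedy peeling defining the cn-pairs supplies the converse $2$-approximation, yielding $d^* \leq 2\sqrt{x^* y^*}$. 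Substituting gives $\rho(S^*, T^*) \leq 2|L|^\beta\sqrt{x^* y^*}$.

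The main obstacle is the upper-bound lemma, and specifically the factor-$2$ relationship $d^* \leq 2\sqrt{x^* y^*}$ between the single-layer directed densest-subgraph density and the maximum cn-pair: this is exactly where the constant $2$ in the denominator of the approximation factor originates, and where the external $[x,y]$-core machinery of \cite{xy-core} must be carefully imported (in contrast, the undirected Lemma~\ref{lemma3} obtained its analogous bound directly from $\mu(S^*_{\text{SL}}, \ell^*)$ without an extra factor). The lower-bound half, by contrast, is an essentially direct specialization of Lemma~\ref{lem:lemma5} together with AM--GM and the optimality of the FDC-Approx output, and is routine.
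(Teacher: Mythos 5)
Your proposal is correct and follows essentially the same route as the paper: the paper's proof is exactly the chain $\rho(S^+,T^+)\geq\rho(\hat{S},\hat{T})\geq\frac{\psi_\beta}{|L|}\max\{\hat{k}\sqrt{a},\hat{r}/\sqrt{a}\}\geq\frac{\psi_\beta}{|L|}\sqrt{\hat{k}\hat{r}}\geq\frac{\psi_\beta}{|L|}\sqrt{x^*y^*}\geq\frac{\psi_\beta}{2|L|}\cdot\frac{|E_{\ell^*}[S^*_{\text{SL}},T^*_{\text{SL}}]|}{\sqrt{|S^*_{\text{SL}}||T^*_{\text{SL}}|}}\geq\frac{\psi_\beta}{2|L|^{\beta+1}}\rho(S^*,T^*)$, using Lemma~\ref{lem:lemma5}, the cited $[x,y]$-core bound $2\sqrt{x^*y^*}\geq |E(S^*,T^*)|/\sqrt{|S^*||T^*|}$ from \cite{xy-core}, and the directed analogue of Lemma~\ref{lemma3}. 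The only cosmetic difference is that the paper imports the factor-$2$ cn-pair inequality as a black-box lemma rather than re-deriving it as you sketch.
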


\vspace{-3ex}
\subsection{Discussion}
Another well-studied densest subgraph problem in single-layer graphs is to find the maximum $k$-core, which is equivalent to finding the subgraph $S^*$ that maximizes the minimum induced degree. Based on this density measure, several community models are introduced~\cite{community1, community2}. The extended version of this problem to ML networks, \textit{Best Friends Forever (BFF)} problem \cite{BFF}, can lead to more powerful community models than their single-layer counterparts. 
\vspace{-1ex}
\begin{problem}[BFF-MM]\cite{BFF} 
\label{prob:bff}
Given an ML network $G = (V, E, L)$,  let $\mu(S, \ell)$ denote the minimum induced degree in $S$ in layer $\ell$. Find a subset of vertices $S^* \subseteq V$ such that:
\begin{equation}
\label{eq6}
    S^* = \arg \max_{S \subseteq V} \min_{\ell \in L} \mu(S, \ell).
\end{equation}
\end{problem}

\begin{theorem}
Given an ML network $G$, let $k_{\max}$ be the maximum value of $k$ such that the $(k, |L|)$-FC of $G$ is non-empty. $(k_{\max}, |L|)$-FC is the exact solution to Problem~\ref{prob:bff}.
\end{theorem}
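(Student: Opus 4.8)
The plan is to reduce the statement to a clean two-sided bound on the optimal BFF-MM value, after first observing what the constraint $\lambda = |L|$ does to the FirmCore definition. By Definition~\ref{FirmCore}, a node $v$ of a $(k,\lambda)$-FirmCore must have induced degree $\ge k$ in at least $\lambda$ layers; when $\lambda = |L|$ this forces $\deg^H_\ell(v) \ge k$ in \emph{every} layer $\ell \in L$. Hence the $(k,|L|)$-FC is precisely the (unique, by Proposition~\ref{prop:FirmCore1}) maximal subgraph $H = G[C_k]$ satisfying $\min_{\ell \in L} \mu(C_k,\ell) \ge k$, which is exactly the feasibility notion underlying the objective of Problem~\ref{prob:bff}. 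With this identification in hand, writing $\mathrm{OPT} = \max_{S \subseteq V} \min_{\ell \in L} \mu(S,\ell)$, it suffices to show $\mathrm{OPT} = k_{\max}$ and that $C_{k_{\max}}$ attains it.

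For the lower bound I would use $C_{k_{\max}}$ itself as a feasible witness: by the identification above, every node of $C_{k_{\max}}$ has induced degree $\ge k_{\max}$ in all layers, so $\min_{\ell} \mu(C_{k_{\max}},\ell) \ge k_{\max}$, and therefore $\mathrm{OPT} \ge k_{\max}$. For the matching upper bound, let $S^*$ be an optimal solution and set $d = \min_{\ell} \mu(S^*,\ell) = \mathrm{OPT}$. Then every $v \in S^*$ has induced degree $\ge d$ in every one of the $|L|$ layers, so $S^*$ itself satisfies the defining degree condition of a $(d,|L|)$-FirmCore. By maximality and uniqueness (Proposition~\ref{prop:FirmCore1}), $S^*$ is contained in the $(d,|L|)$-FC, which is thus non-empty; by the definition of $k_{\max}$ as the largest order $k$ for which the $(k,|L|)$-FC is non-empty, this yields $d \le k_{\max}$, i.e. $\mathrm{OPT} \le k_{\max}$. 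Combining the two bounds gives $\mathrm{OPT} = k_{\max}$, and since $C_{k_{\max}}$ is feasible with value $\ge k_{\max} = \mathrm{OPT}$, it is an exact optimum.

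The argument is short and involves no real computation; the only point requiring care — and the single place the FirmCore machinery is actually invoked — is the step that turns ``$S^*$ satisfies the all-layer degree constraint'' into ``$S^*$ lies inside a non-empty $(d,|L|)$-FC.'' This is where I must lean on the maximality and uniqueness of the FirmCore (Proposition~\ref{prop:FirmCore1}): I should confirm that the union of any two subgraphs meeting the all-layer degree-$\ge d$ condition again meets it, so that a well-defined maximal such subgraph exists and contains $S^*$. Everything else follows immediately from the definitions.
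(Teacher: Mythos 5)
Your proof is correct. The paper omits its proof of this theorem entirely, but your argument is the natural one it is implicitly relying on: with $\lambda=|L|$ the FirmCore condition collapses to an all-layer minimum-degree constraint, the $(k_{\max},|L|)$-FC witnesses $\mathrm{OPT}\ge k_{\max}$, and any optimal $S^*$ with value $d$ satisfies the $(d,|L|)$-FC condition, so (by the union/maximality argument of Proposition~\ref{prop:FirmCore1}) it sits inside a non-empty $(d,|L|)$-FC, forcing $d\le k_{\max}$. Nothing is missing; the one step you flagged as needing care (closure of the all-layer degree condition under unions of vertex sets) does hold, since induced degrees can only grow when the vertex set grows.
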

\vspace{-2ex}

\begin{figure*}
    \centering
    \hspace{-1ex}
    \includegraphics[height=0.13\textwidth, width=\textwidth]{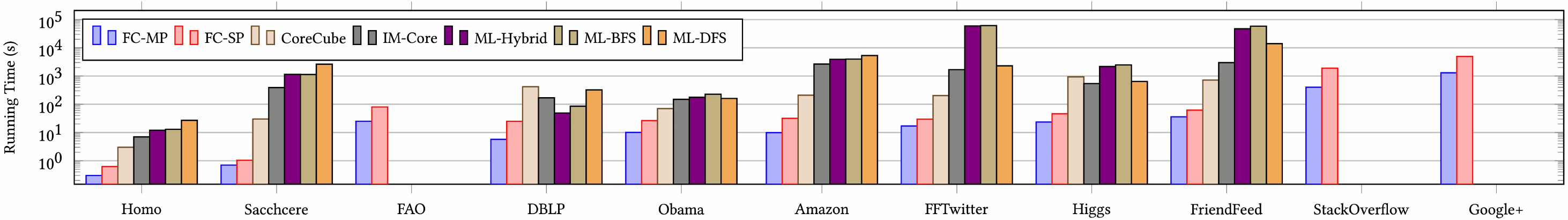}
     \vspace{-7mm}
    \caption{Comparative evaluation: the runtime of the proposed algorithms, ML-core algorithms, and CoreCube algorithm.}
    \label{fig:Compare}
\end{figure*}

\begin{center}
\begin{table} [tpb!]
 \caption{Network Statistics}
 \vspace{-2ex}
\resizebox{0.18\textwidth}{!} {
\subfloat[][Undirected Networks]{
\begin{tabular}{l | c c c}
 \toprule
  {Dataset} & {$|V|$} &  {$|E|$} & {$|L|$}\\
 \hline\hline 
    Homo&  18k  &  153k  &  7   \\
    Sacchcere & 6.5k  &  247k  &   7   \\
    DBLP &  513k &  1.0M  &  10   \\
    Amazon & 410k  &  8.1M  &  4   \\
    FFTwitter & 155k  &  13M  &   2   \\
    Friendfeed & 510k  &  18M  &  3    \\
    StackOverflow &  2.6M  &  47.9M  &   24   \\
    Google+ &  28.9M  &  1.19B  &  4    \\
 \bottomrule
\end{tabular}
}
}
\hspace{4mm}
\resizebox{0.161\textwidth}{!} {
\subfloat[][Directed Networks]{
\begin{tabular}{l | c c c}
 \toprule
  {Dataset} & {$|V|$} &  {$|E|$} & {$|L|$}\\
 \hline\hline
    Slashdot  &  51k &  139k  &   11 \\
    FAO & 214  &  319k  &   364   \\
    Sanremo   & 56k  &  462k  &   3  \\
    Cannes   & 438k  &  1.2M  &  3   \\
    Diggfriends & 279k  &  1.7M  &  4   \\
    UCL & 677k  &  1.7M  &  3   \\
    Obama & 2.2M  &  3.8M  &   3   \\
    Higgs & 456k  &  13M  &  4   \\
 \bottomrule
\end{tabular}
}
}
 \label{tab:datastat}
\end{table}
\end{center}

\vspace*{-2ex}
\section{Experiments}
\label{sec:Experiments}

In this section, we \textbf{(1)} evaluate the efficiency of our algorithms and compare them with the state-of-the-art ML core decompositions,
\textbf{(2)} compare the cohesiveness of FirmCore and \textbf{k}-core,
\textbf{(3)} explore the characteristics of FirmCore, \textbf{(4)} evaluate and compare the quality of the solutions found by FC-Approx and state-of-the-art for the ML densest subgraph problem, \textbf{(5)} evaluate the solution of FDC-Approx to the directed ML densest subgraph problem, and \textbf{(6)} show the effectiveness of FirmCore and FirmD-Core via case studies.

\noindent
\textbf{Setup.} 
We perform extensive experiments on sixteen real networks from \cite{KONECT, MLcore, Friendfeed, Higgs, amazon_datset, Google+, FAO, homo, Twitter_datasets} including social, genetic, co-authorship, financial, and co-purchasing networks, whose main characteristics are summarized in Table~\ref{tab:datastat}. In some experiments, we have ignored the direction of edges and treated directed graphs as undirected. All methods are implemented in Python (Appendix~\ref{sec:Reproducibility}). The experiments are performed on a Linux machine with Intel Xeon 2.6 GHz CPU and 128 GB RAM, using 4 cores in case of  multi-core processing implementations, and using a single core otherwise.

\noindent
\textbf{Efficiency.}
We first compare the FirmCore (FC for short) decomposition algorithm  with other dense structure mining algorithms in ML networks, which are ML-Hybrid, ML-BFS, ML-DFS \cite{MLcore, ml-core-journal}, IM-Core \cite{ml-core-journal}, and CoreCube \cite{CoreCube}, in terms of the running time and memory usage. Since our FC algorithm is capable of running in a multi-processor setting, we use both settings, multi-processor and single-processor (denoted FC-MP and FC-SP). Notably, other algorithms are not capable of running in multiprocessor setting. The running time results are shown in Figure~\ref{fig:Compare}. FirmCore decomposition, even FC-SP, outperforms the other five algorithms in all datasets. In large networks, FC  is at least $70\times$  (resp. $10\times$) faster than the ML-Hybrid, ML-DFS, and ML-BFS algorithms (resp. CoreCube and IM-Core), with the significant improvement being for the massive datasets. While FC-SP (resp. FC-MP) takes less than 1.5 hours (resp. half hour) on Google+ dataset, other ML $\textbf{k}-$core decomposition algorithms do not terminate in 10 days. Similar results are observed on small datasets like FAO with a large number of layers:  here, FC-SP (resp. FC-MP) terminates in under 2 minutes (resp. half a minute). Another important characteristic  of core decomposition methods is their memory usage: FirmCore outperforms other algorithms in terms of memory usage (Appendix~\ref{sec:Memory}). Notice that missing bars in Figure~\ref{fig:Compare} correspond to either taking more than 10 days or memory usage~more~than~100GB. 

\begin{table} [tpb!]
\vspace{-2ex}
 \caption{FirmD-Core evaluation}
 \vspace{-2ex}
\resizebox{0.51\textwidth}{!} {
\hspace{-9mm}
\begin{tabular}{c | c  c | c c | c c | c c | c c | c c | c c | c c}
 \toprule
  & \multicolumn{2}{c|}{Slashdot}   & \multicolumn{2}{c|}{FAO}  & \multicolumn{2}{c|}{Sanremo} & \multicolumn{2}{c|}{Cannes} & \multicolumn{2}{c|}{Diggfriends} &  \multicolumn{2}{c|}{UCL}  &  \multicolumn{2}{c|}{Obama} & \multicolumn{2}{c}{Higgs} \\
 \hline \hline 
    \# Cores  &  \multicolumn{2}{c|}{1037}  &  \multicolumn{2}{c|}{106474}    &     \multicolumn{2}{c|}{4614}      &   \multicolumn{2}{c|}{4407} & \multicolumn{2}{c|}{55061} & \multicolumn{2}{c|}{3387} & \multicolumn{2}{c|}{54888} & \multicolumn{2}{c}{258402}   \\
    \midrule
    Method  &  SP & MP  &  SP & MP    &     SP & MP      &   SP & MP & SP & MP & SP & MP & SP & MP & SP & MP   \\
\midrule
Time (s) & 117 & 39 & 2035 & 501 & 105& 75& 2430& 674& 3089 & 2024 & 499 & 351 & 194681 & 132112 & 203578 & 143749\\
\midrule  
Memory (MB)& 150 & 151 & 36 & 37 & 89 & 91 & 594 & 596 & 421& 424& 908 & 919 & 2892 & 2909 & 778&785\\
 \bottomrule
\end{tabular}
}
\vspace{-3ex}
\label{tab:FirmD-Dore}
\end{table}

Since there are no prior algorithms for core decomposition of  directed ML networks, we just compare the two settings of our algorithm, single-processor and multi-processors, shown in Table~\ref{tab:FirmD-Dore}. As expected, using multi-processors is much faster than a single-processor, and it takes  at most $1\%$ more memory. The speedup varies depending on \#layers and \#FirmCores found.

\begin{table} [tpb!]
 \caption{Average degree of each layer in densest FirmCore and densest \textbf{k}-core.}
 \vspace{-2ex}
\resizebox{0.48\textwidth}{0.09\textwidth}{
\begin{tabular}{l | c | c c c c c c c c}
 \toprule
  {Dataset ($\beta$)} & Model & {layer 1} &  {layer 2} & {layer 3} & {layer 4} &  {layer 5} & {layer 6} & {layer 7} & {Average}\\
 \hline\hline 
    \multirow{2}{*}{Higgs ($\beta = 3$)} & FirmCore & 19.2 & 9.88 & 8.94 & 0.56 & - & - & - & \textbf{9.64}\\
    & \textbf{k}-Core & 18.13 & 10.16 & 9.44 & 0.75 & - & - & - & 9.62\\
    \hline
    \multirow{2}{*}{Higgs ($\beta = 6$)} & FirmCore & 19.2 & 9.88 & 8.94 & 0.56 & - & - & - & \textbf{9.64}\\
    & \textbf{k}-Core & 10.38 & 5.27 & 5.62 & 2.3 & - & - & - & 5.89\\
    \hline
    \multirow{2}{*}{Sacchcere ($\beta = 1$)} & FirmCore & 15.2 & 10.68 & 2.71 & 15.2 & 0.36 & 0.28 & 33.27  & \textbf{11.1}\\
    & \textbf{k}-Core & 4.1 & 8.35 & 2.37 & 26.3 & 0.37 & 0.24 & 26.66 & 9.77\\
    \hline
    \multirow{2}{*}{Sacchcere ($\beta = 2$)} & FirmCore & 11.63 & 11.07 & 5.43 & 12.55 & 0.55 & 0.37 & 27.77  & \textbf{9.91}\\
    & \textbf{k}-Core & 10.17 & 7.49 & 7.45 & 8.77 & 0.54 & 0.4 & 15.11 & 7.13\\
        \hline
    \multirow{2}{*}{Homo ($\beta = 1$)} & FirmCore & 2.62 & 27.03 & 0.02 & 0.18 & 8.57 & 0.05 & 0.02 & \textbf{5.49}\\
    & \textbf{k}-Core & 2.17 & 28.51 & 0.02 & 0.21 & 5.89 & 0.06 & 0.02 & 5.26\\
    \hline
    \multirow{2}{*}{Homo ($\beta = 2$)} & FirmCore & 6.27 & 23.23 & 0.08 & 0.52 & 5.79 & 0.24 & 0.02 & \textbf{5.16}\\
    & \textbf{k}-Core & 6.24 & 15.48 & 0.09 & 0.72 & 6.36 & 0.2 & 0.02 & 4.15\\
 \bottomrule
\end{tabular}
}
 \label{tab:average_density}
\end{table}

\noindent
\textbf{Cohesiveness}. Next, we compare the cohesiveness of FirmCore and \textbf{k}-core \cite{azimi-etal}. As discussed in \S~\ref{sec:RW}, CoreCube \cite{CoreCube} is a special case of \textbf{k}-core, so by considering \textbf{k}-core decomposition we do not miss the most cohesive subgraphs found by CoreCube. We use two metrics. The first metric is the average degree density in each layer, and the second metric is local clustering coefficient (LCC). Since the numbers of cores are different, to have a fair comparison, here, we consider the densest FirmCore and the densest \textbf{k}-core subgraphs obtained by density measure $\rho$ in ML networks~\cite{MLcore}. We choose the value of $\beta$ such that the solution is stable (i.e., does not change for $\beta \pm \epsilon$). Table~\ref{tab:average_density} shows the results on Higgs, Sacchcere, and Homo datasets. Not only the average of layers' density is higher for densest FirmCore, but also it has a higher density in more layers.

Moreover, Figure~\ref{fig:CC_higgs_main} shows the distribution of nodes' LCC in densest FirmCore and densest \textbf{k}-core on the Higgs dataset. While the overall distributions are very similar, FirmCore is capable of finding a subgraph that contains more nodes with high LCC ($\geq 0.75$) and less nodes with low LCC ($\leq 0.1$). The results on other datasets are similar, which we omit  for lack of space.

\begin{figure}[tpb!]
    \centering
    \includegraphics[width=0.24\linewidth, height=0.185\linewidth
    ]{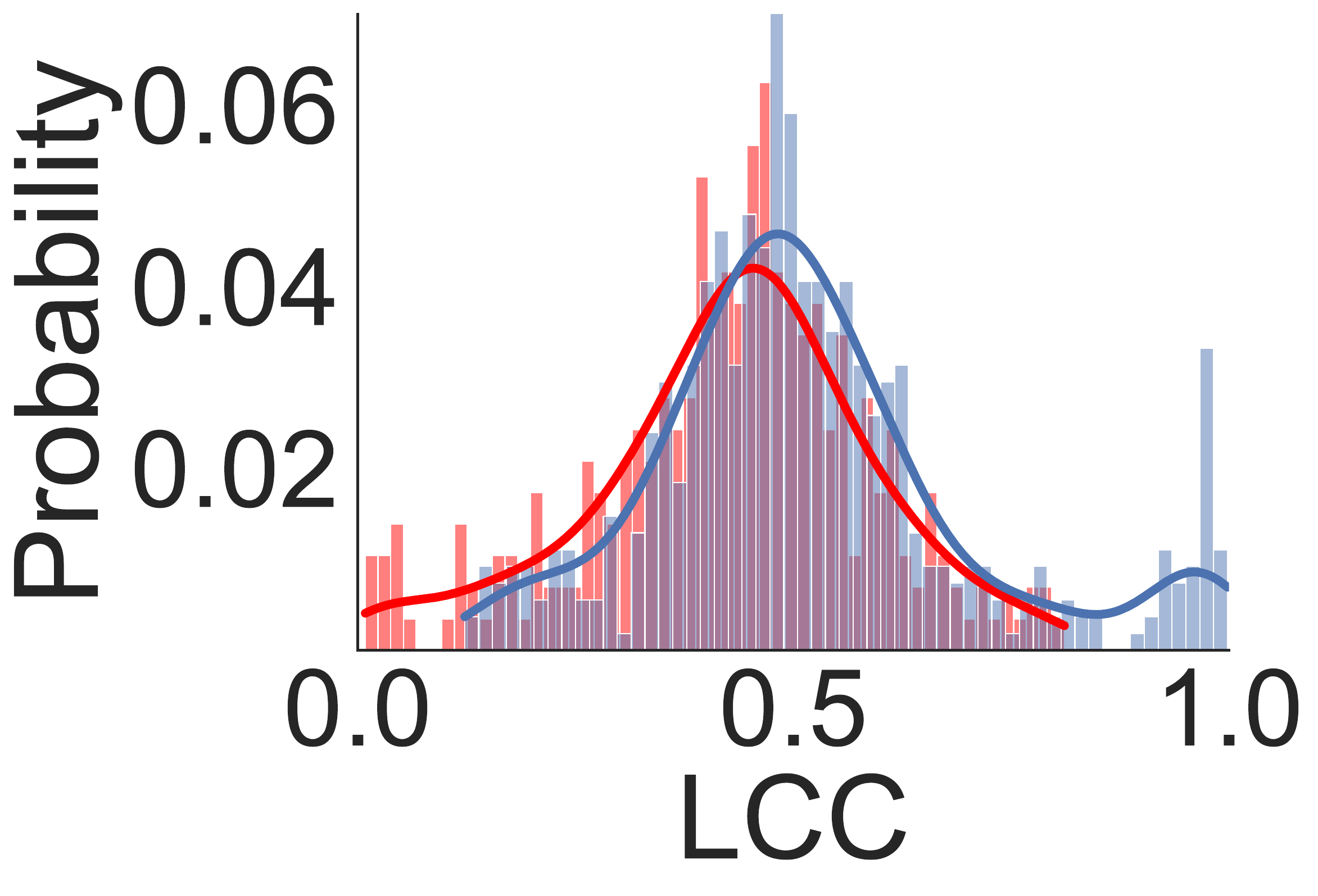}
    \includegraphics[width=0.24\linewidth, height=0.185\linewidth
    ]{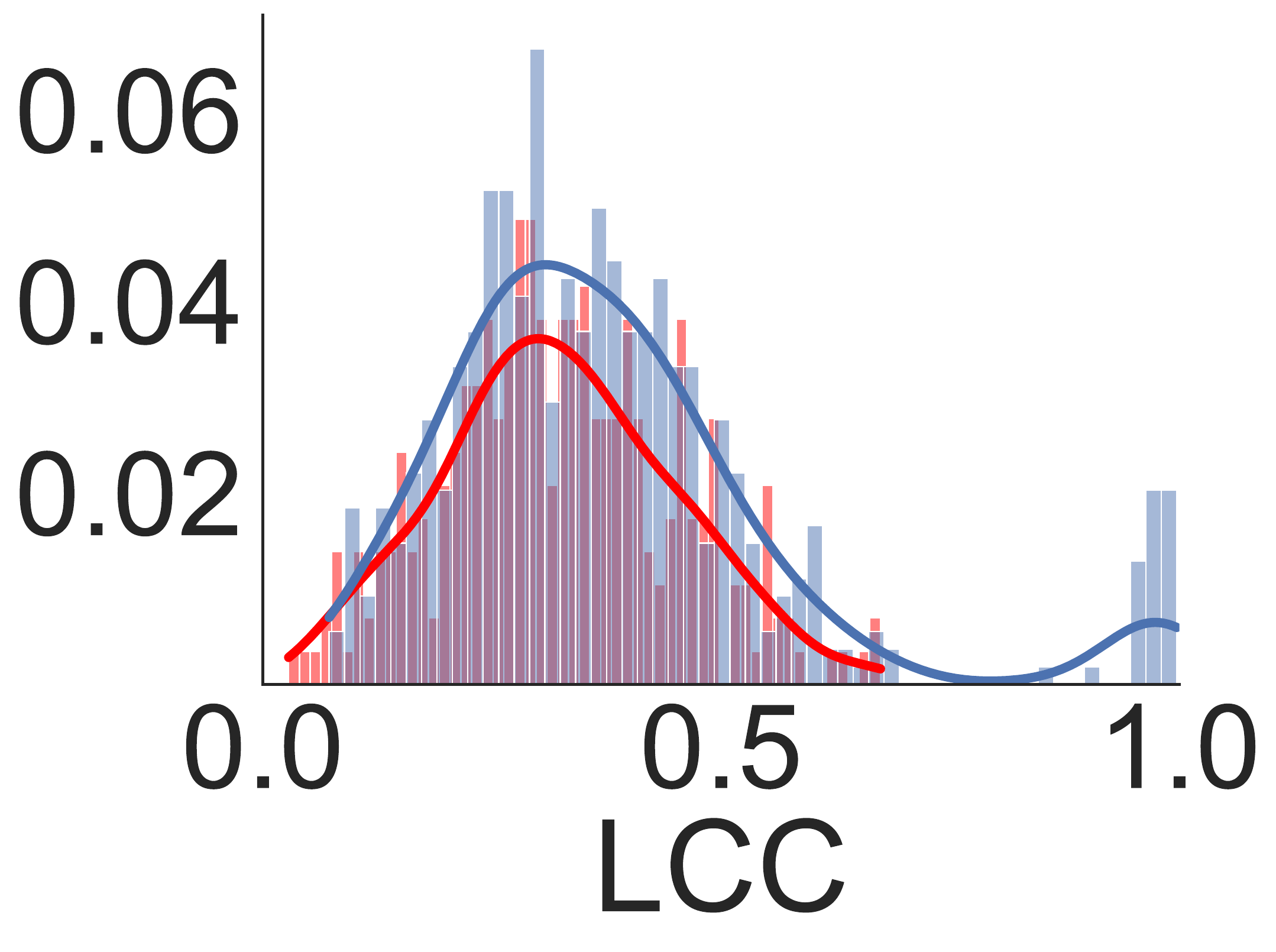}
    \includegraphics[width=0.24\linewidth, height=0.185\linewidth
    ]{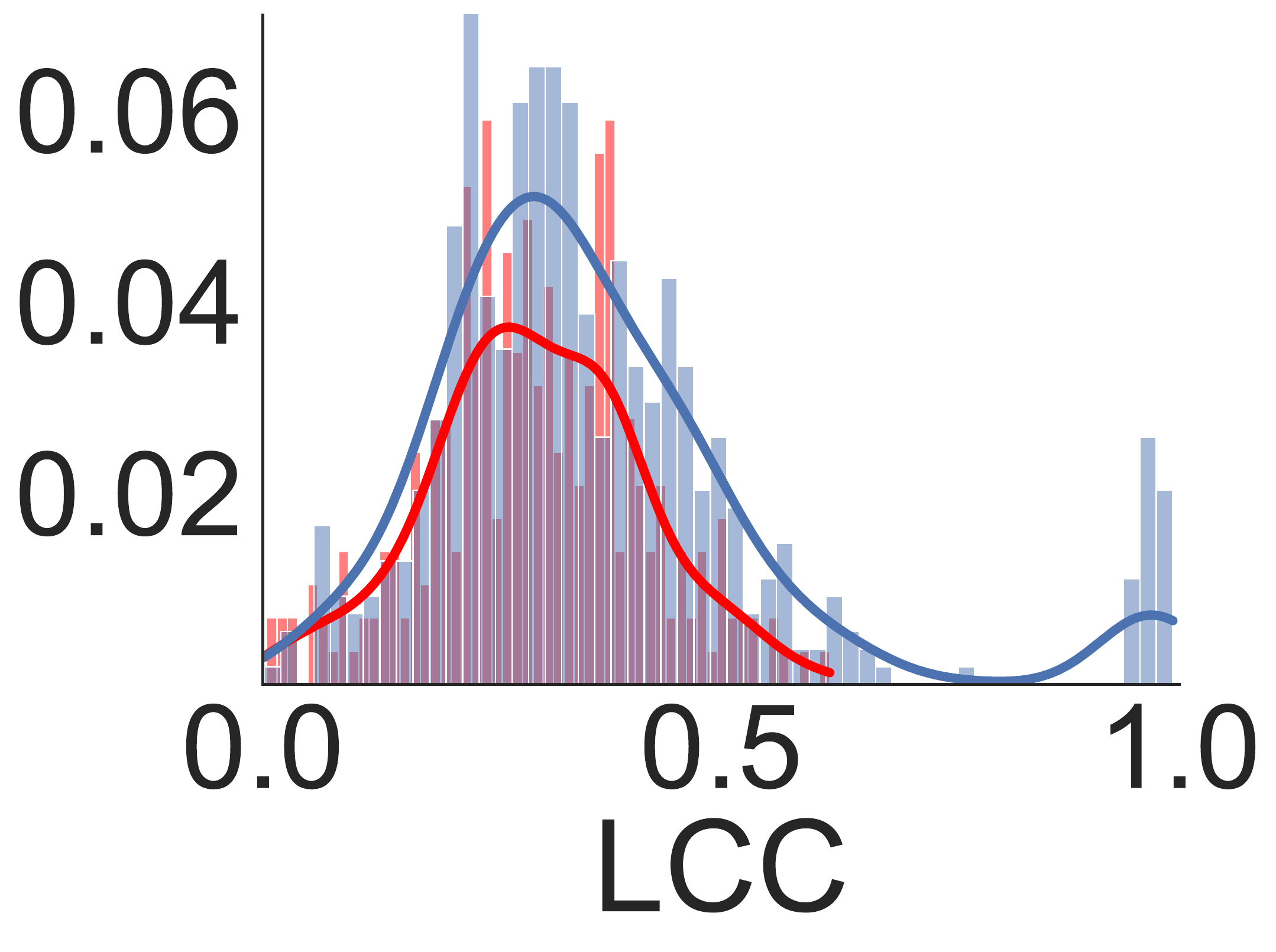}
    \includegraphics[width=0.24\linewidth, height=0.185\linewidth
    ]{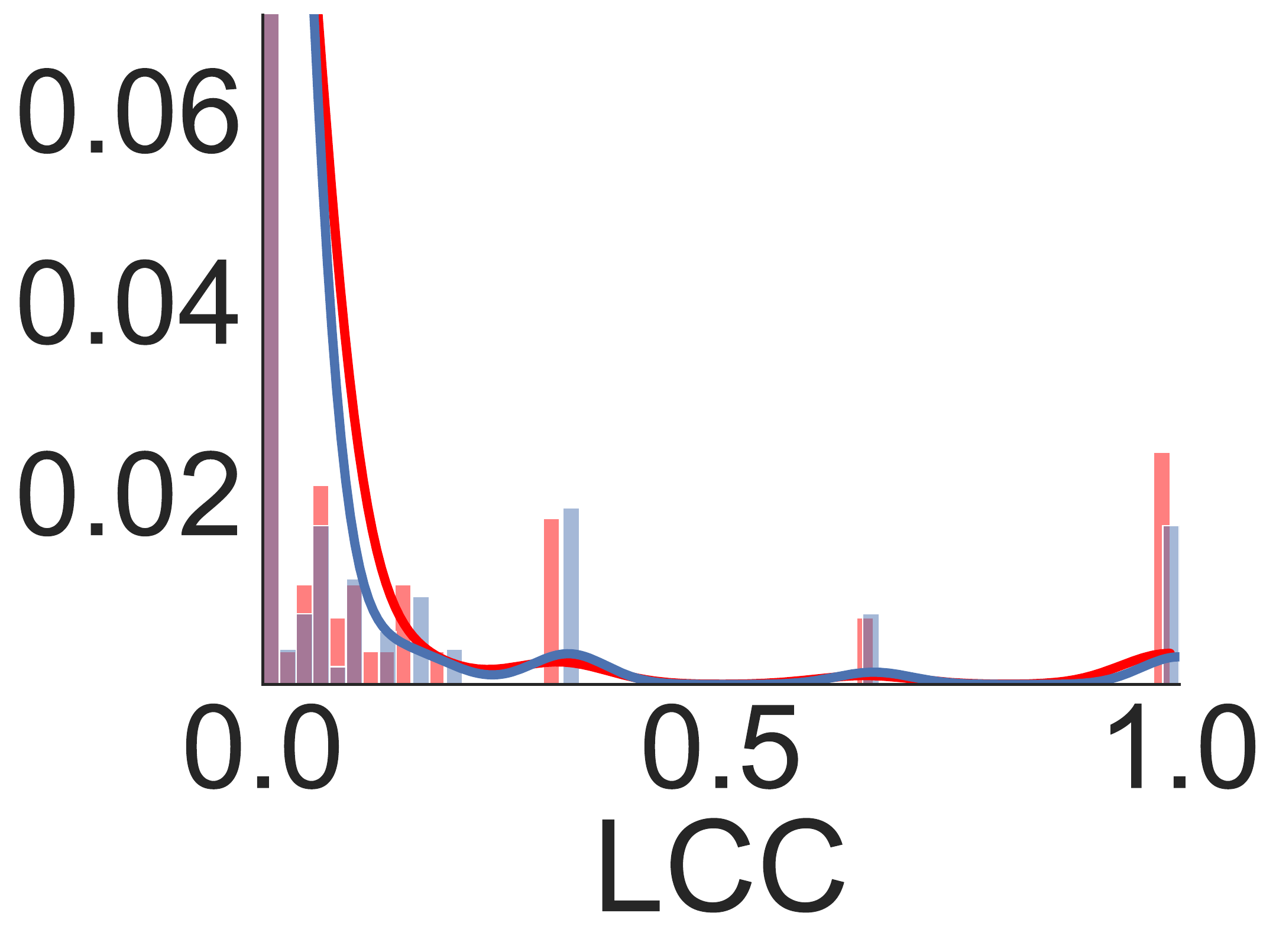}
    \vspace{-3ex}
    \caption{Distribution of LCC of densest FirmCore (Blue) and densest ML \textbf{k}-core (Red) on Higgs dataset ($\beta = 3$). Left to right, layer 1, 2, 3, and 4, respectively. Solid lines show the kernel density estimation.}
    \label{fig:CC_higgs_main}
    \vspace{-3ex}
\end{figure}

\begin{figure}[t!]
    \centering
    \includegraphics[width=0.375\textwidth]{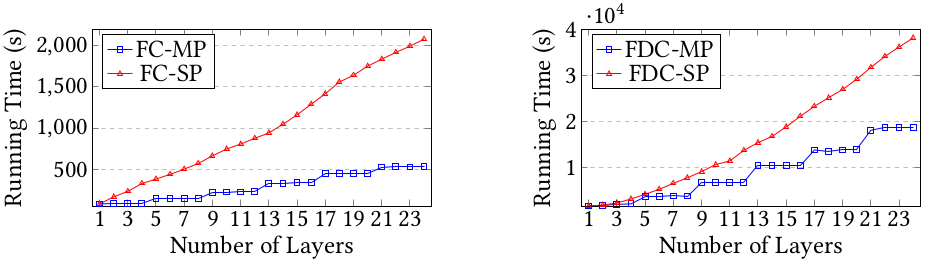}
    \vspace{-4mm}
    \caption{The runtime of the FirmCore and FirmD-Core algorithms with varying the number of layers (Left: StackOverflow, Right: DiggFriends).}
    \label{fig:Layers}
\end{figure}

\noindent
\textbf{Scalability.} Figure \ref{fig:Layers} demonstrates the effect of \#layers on the running time of the FirmCore and FirmD-Core algorithms. For the FirmCore algorithm, we use different versions of StackOverflow obtained by selecting a variable number of layers from 1 to 24, and use a similar approach on the Diggfriends network, used in evaluating FirmD-Core. Unlike previous  algorithms whose running time grows exponentially, the running time of  FirmCore, as well as FirmD-Core, scales gracefully. Thus, FirmCore and FirmD-Core can be applied on large networks with a large \#layers. Moreover, based on the running time results in Figure~\ref{fig:Compare} (resp. Table~\ref{tab:FirmD-Dore}), we see that FirmCore (resp. FirmD-Core) algorithm can scale to graphs containing billions (resp. tens of millions) of edges.

\begin{figure}
    \centering
    \includegraphics[width=0.5\textwidth]{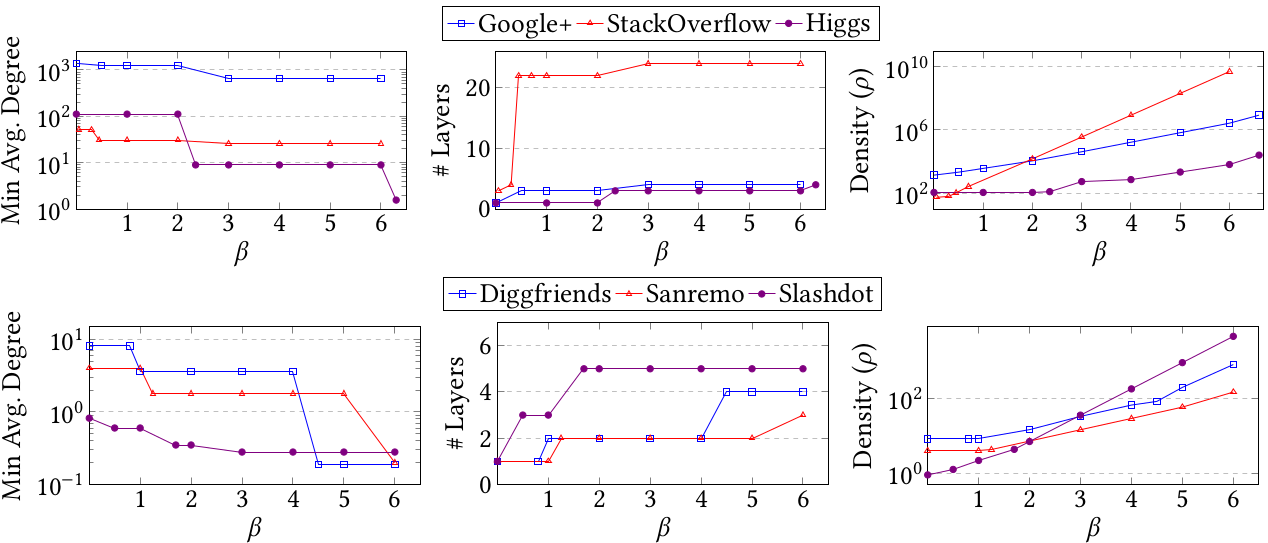}
    \vspace{-6ex}
    \caption{Minimum average-degree in a layer, number of selected layers, and $\rho$'s value of the output of the FC-Approx ($1^{st}$~row), and FDC-Approx ($2^{nd}$~row),~with~varying~$\beta$.}
    \label{fig:Densest-subgraph-stat}
\end{figure}

\noindent
\textbf{Density.}
While we guarantee an average degree density ($\rho$) for each FirmCore in Lemmas~\ref{lemma1} and \ref{lem:lemma5}, we experimentally evaluate the FirmCores' density using  clique-density measure. Results suggest while FirmCore captures several near-cliques (subgraphs  of density $\approx 1$), it finds large sets of lower density ($0.15-0.2$) as well, which is a significant density for sets of hundreds of nodes (Appendix~\ref{sec:density_analysis}).

\noindent
\textbf{Frequent quasi-cliques.}
As discussed before, the FirmCore decomposition can speed up the extraction of frequent quasi-cliques. The results suggest a significant pruning of the search space by FirmCore, which in the worst case (resp. best case) prunes 87$\%$ (resp. $99.99\%$) of the search space (Appendix~\ref{sec:frequent_quasi_clique}).

\noindent
\textbf{FC index and Top-$\lambda$ degree.}
As an important characteristic of the FirmCore, we found that in real networks, there is a strong correlation between the $core_\lambda$ index of a node and its Top-$\lambda$ degree, which is its upper bound. Similar results have been seen in single-layer networks, which have led to an anomaly detection approach \cite{anomaly-detection1}. The Spearman’s rank correlation coefficient between Top-$\lambda$ degree and FirmCore index on all datasets is significantly more than $0.7$ ($p$-value $< 0.0001$) and in most cases is close to 1 (Appendix~\ref{sec:corr}).

\vspace*{-2ex} 
\subsection{Multilayer Densest Subgraph}
In this part, we evaluate the quality of the solution proposed by FC-Approx and FDC-Approx for the densest subgraph problem. Figure~\ref{fig:Densest-subgraph-stat} provides characteristics of approximation solution of the ML densest subgraph with varying $\beta$. In a directed $(S, T)-$induced subgraph, we consider average degree as $\frac{|E(S, T)|}{\sqrt{|S|\cdot |T|}}$~\cite{DirectedDensity}. The results are expected as decreasing $\beta$ obliges the model to choose a subgraph with large average-degree density in a few layers and even one layer when $\beta$ is close enough to zero. That is why the minimum average degree has a decreasing trend with increasing $\beta$ and opposite the number of layers has increasing trends. On the other hand, density, $\rho(.)$, as a function of $\beta$ shows an exponential trend, as expected.

\noindent
\textbf{Comparative Evaluation.} 
Here we compare the quality of FC-Approx and state-of-the-art, Core-Approx~\cite{MLcore}, by evaluating~their solutions to the ML densest subgraph problem. Figure~\ref{fig:Compare_Densest} shows~the~result~of this comparison on DBLP, Amazon, and FriendFeed networks. On the DBLP dataset, FC-Approx outperforms Core-Approx and provides a solution with greater density $(\rho)$ for all values of $\beta$. In other two cases, the density values are close. Our results report that in the best case (resp. worst case), FC-Approx provides a solution with $2.44\times $ (resp. $\frac{1}{1.4} \times $) density of the Core-Approx's solution. As a conclusion, while the FirmCore-Approx algorithm is more than $800\times$ (resp. $20\times$) faster than the Core-Approx on FriendFeed (resp. DBLP), the density of its solution, with varying $\beta$, is close to (resp. better than) the density of Core-Approx's solution.

\begin{figure}
    \centering
    \includegraphics[width=0.5\textwidth]{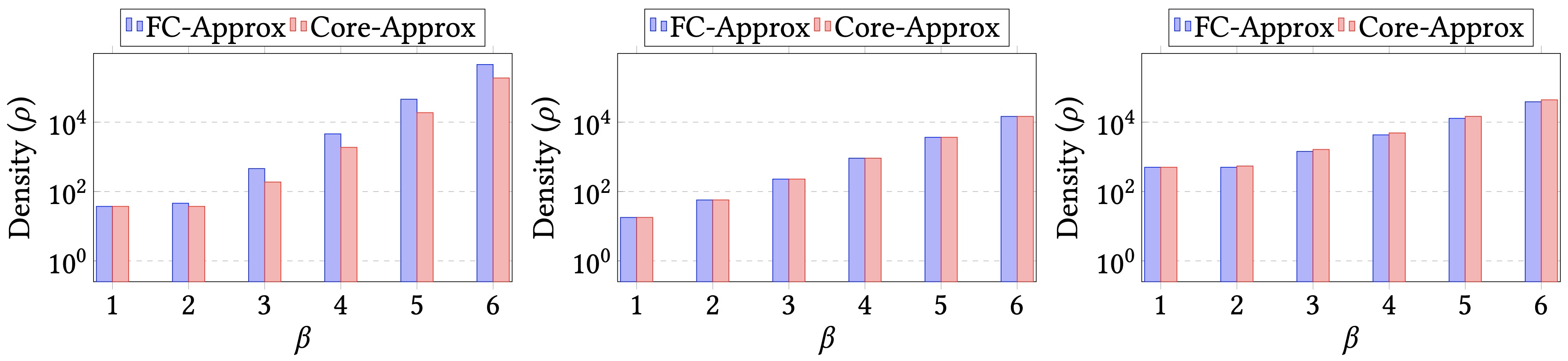}
    \vspace{-8mm}
    \caption{The value of density ($\rho$) of the output of FC-Approx and Core-Approx algorithms on DBLP (Left), Amazon (Middle), and FriendFeed (Right) datasets.}
    \label{fig:Compare_Densest}
    \vspace{-6ex}
\end{figure}

\noindent
\textbf{Case Study of DBLP.}
We present the densest subgraph extracted from the temporal DBLP network~\cite{dblp}, which is a collaboration network where each layer corresponds to  a year in 2016--2020, as an illustration of the superior quality of FC-Approx's solution and a qualitative comparison of FirmCore with \textbf{k}-core. Figure~\ref{fig:case-study} reports the algorithm's output with $\beta = 1.1$. The subgraph contains 31 nodes and 4 layers, and $\rho$ value of $45.06$. The Core-Approx algorithm finds a subgraph with 39 nodes and a $\rho$ value of $41.71$. As  discussed in $\S$~\ref{sec:Introduction}, one drawback of the previous ML cores is forcing \textit{all} nodes in the subgraph to satisfy a degree constraint, while for each node, some layers may be noisy. This subgraph corresponds to an annual report some of whose authors may change each year, and as a result, Core-Approx found~a~subgraph~with~lower~density.

We compare the robustness of ML perspective against single-layer perspective, by  collapsing all layers:  two researchers are neighbors if they collaborate in any year in 2016-2020. We find the exact densest subgraph using a max-flow based algorithm \cite{densest_first}. The densest subgraph is a $0.42$-quasi-clique with $201$ vertices and $8636$ edges. The approximate densest subgraph in the ML perspective is a $0.65$-, $0.75$-, $0.82$-, and $0.71$-quasi-clique in years $2020$, $2019$, $2018$, and $2017$, respectively. This shows that FirmCore can be used to extract hidden relationships, not seen in a~single-layer~perspective.

\begin{figure}[t]
    \centering
    \includegraphics[width=0.7\linewidth, height=0.35\linewidth]{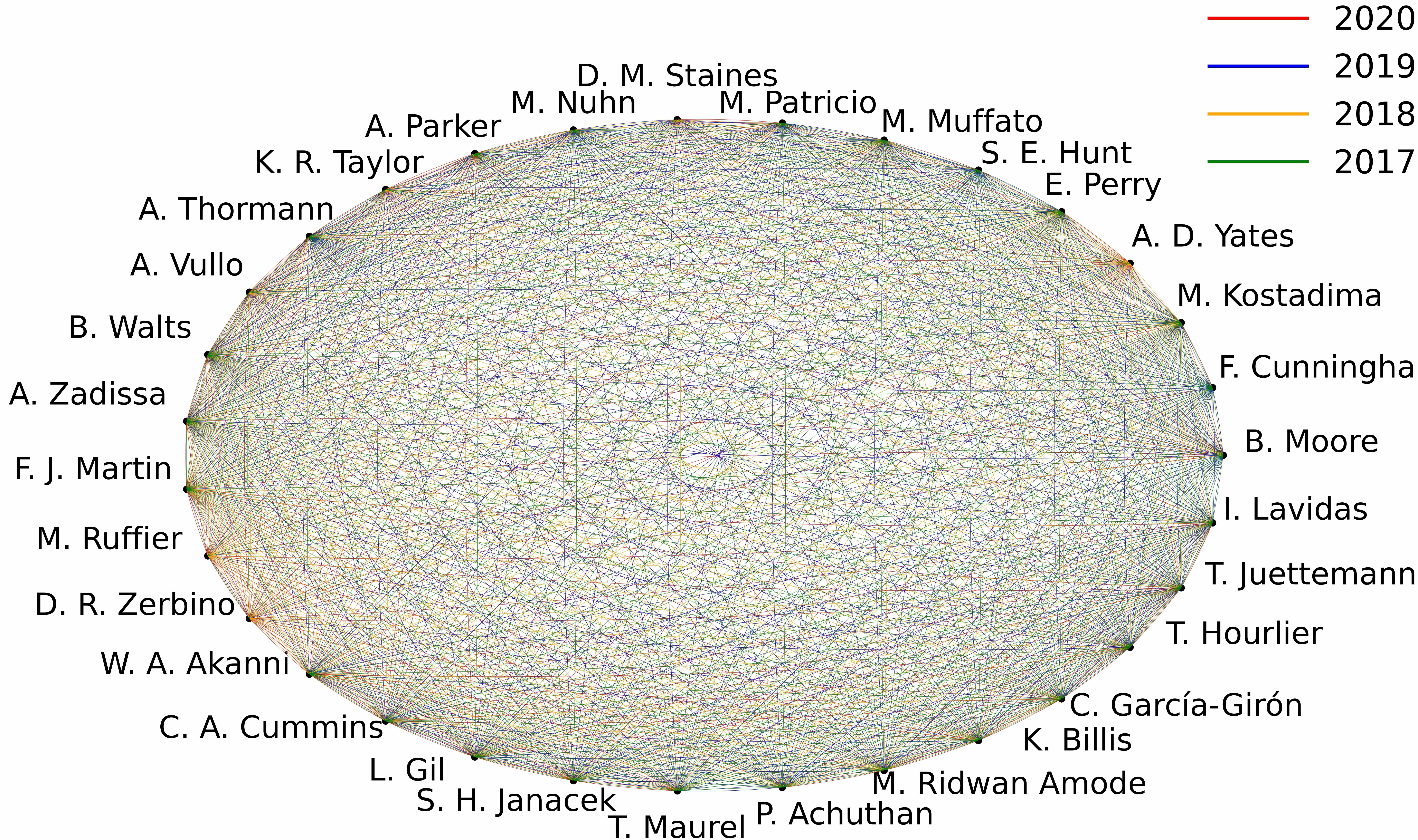}
    \vspace{-2.5ex}
    \caption{Multilayer densest subgraph extracted by Algorithm~\ref{alg:Approx1} (FC-Approx) from the DBLP dataset.}
    \label{fig:case-study}
\end{figure}

\noindent
\textbf{Case study of FAO.} We apply FDC-Approx~on~FAO dataset and our results show that the first two leading exporter~countries~of each good, which correspond to layers, are in the~found $S$~subgraph (Appendix~\ref{sec:case_FAO}).

\vspace*{-2ex} 
\section{Conclusions}
\label{sec:Conclusions}
In this work, we propose and study a novel  extended notion of core decomposition in undirected and directed ML networks, FirmCore and FirmD-Core, and establish its nice properties. Our decomposition algorithms are linear in the number of nodes and edges and quadratic in the number of layers, while the state of the art is exponential in the number of layers. 
Moreover, we extend the densest subgraph mining problem to directed ML graphs. We develop efficient approximation algorithms for this problem in both undirected and directed cases; to our knowledge, no previous polynomial-time approximation algorithms are known for these problems. We  show with detailed  experiments  that FirmCore decomposition leads to algorithms that are significantly more time and memory efficient than the start-of-the-art, with a quality that matches or exceeds.

\bibliographystyle{ACM-Reference-Format}
\bibliography{main}

\appendix


\section{Reproducibility}
\label{sec:Reproducibility}
All algorithm implementations and code for our experiments are available at \href{https://github.com/joint-em/FirmCore}{Github.com/joint-em/FirmCore}.

\section{Pseudocode of FDC-Approx}
\label{sec:FDC-Approx-appendix}
Algorithm~\ref{alg:Approx2} reports the pseudocode of the method for finding the approximate solution of densest directed subgraph problem in ML networks, discussed in Section~\ref{sec:FDC-Approx}.
\vspace{-1.5ex}
\begin{algorithm}[ht]
    \small
    \caption{FDC-Approx}
    \label{alg:Approx2}
    \begin{algorithmic}[1]
        \Require{A multilayer digraph $G = (V, E, L)$ and $\beta \in \mathbb{R}^+$}
        \Ensure{The densest $(S^+, T^+)$-induced FirmD-Core of $G$}
        \State $\Omega \leftarrow$ FirmD-CoreDecomposition(G)
        \State \Return $\arg \max_{(S, T) \in \Omega} \rho(S, T)$
    \end{algorithmic}
\end{algorithm}

\vspace{-3ex}
\section{Proofs of Propositions}
\textbf{Proposition~\ref{prop:FirmCore1}
:}
 Suppose $C_{(k, \lambda)}$ and $C'_{(k, \lambda)}$ are two distinct $(k, \lambda)$-FirmCores of $G$. By Definition 1, $C_{(k, \lambda)}$ is a maximal subgraph s.t. $ \forall v \in C_{(k, \lambda)}$ there are $\geq \lambda$ layers where the degree of $v$ is $\geq k$. Similarly, $C'_{(k, \lambda)}$ is a maximal subgraph with the same property. Then the subgraph $C_{(k, \lambda)} \cup C'_{(k,\lambda)}$ trivially satisfies the FirmCore conditions, contradicting~the~maximality~of~$C_{(k, \lambda)}$~and~$C'_{(k,\lambda)}$.~\qed 
 
 \noindent
 \textbf{Proposition~\ref{prop:FirmCore2}:} The property follows from the definition of FirmCore and this fact that in a subgraph, every node $v$ that has at least $k + 1$ neighbors, also has at least $k$ neighbors. Accordingly, in each of the (at least) $\lambda$ layers that $v$ has degree $k + 1$, it also has degree $k$. So it is in the $(k, \lambda)$-core as well.  Similarly, if vertex $v$ in at least $\lambda + 1$ layers has degree no less than $k$, in at least $\lambda$ layers it has degree at least $k$ as well. \qed

\noindent
 \textbf{Proposition~\ref{prop:quasi-clique-prop}:}
Suppose there is a node $v$ in quasi-clique, $H$, that does not in the $(\gamma (k - 1), \lambda)$-FirmCore. By definition of FirmCore, $v$ in less than $\lambda$ layers can have a degree at least $\gamma (k - 1)$. This leads to violating the definition of frequent cross graph quasi-clique since $v$ cannot be in a quasi-clique in more than $\lambda - 1$ layers.\qed

\section{Guarantee of FC-Approx}
\noindent
\textbf{Proof of Lemma~\ref{lemma2}:}
By Lemma (1), using the fact that $k^+ \geq \mu(S^*_{\text{SL}}, \ell^*)$, and $C^* = \arg \max_{C_{k,\lambda}\in \Omega} \: \rho(C_{k,\lambda})$, we have:
\begin{align*}
\rho(C^*) &\geq \rho(C^+) \geq \frac{k^+}{2 |L|} \underset{\underset{0 \leq \xi <\lambda^+}{\xi \in \mathbb{Z}}}{max} \: (\lambda^+ - \xi) (\xi + 1)^\beta \geq\frac{\psi_\beta}{2 |L|}  \mu(S^*_{\text{SL}}, \ell^*). \qed 
\end{align*}

\noindent
\textbf{Proof of Lemma~\ref{lemma3}:}
Since $S^*_{\text{SL}}$ maximizes the density in layer $\ell^*$,  removing the node with the minimum degree cannot increase its density. Accordingly,
$$\frac{|E_{\ell^*} [S^*_{\text{SL}}]|}{|S^*_{\text{SL}}|} \geq \frac{|E_{\ell^*} [S^*_{\text{SL}}]| - \mu(S^*_{\text{SL}}, \ell^*)}{|S^*_{\text{SL}}| - 1},$$
Simplifying, we get $\mu(S^*_{\text{SL}}, \ell^*) \geq \frac{|E_{\ell^*} [S^*_{\text{SL}}]|}{|S^*_{\text{SL}}|}$. Moreover, based on the definition of $\rho(.)$ and $S^*_{\text{SL}}$, we have:
$$\frac{|E_{\ell^*} [S^*_{\text{SL}}]|}{|S^*_{\text{SL}}|} |L|^\beta \geq \max_{\ell \in L} \frac{|E_\ell[S^*]|}{|S^*|} |L|^\beta \geq \max_{\hat{L} \subseteq L} \min_{\ell \in \hat{L}} \frac{|E_\ell[S^*]|}{|S^*|} |\hat{L}|^\beta = \rho(S^*)$$
\\ Hence, $\rho(S^*) \leq \mu(S^*_{\text{SL}}, \ell^*) |L|^\beta.$\qed

\vspace{1ex}
\noindent
\textbf{Proof of Theorem~\ref{theorem1}:}
Based on the Lemmas 2 and 3 we have:
\begin{align*}
\qquad \quad \: \: \: \quad \rho(C^*) \geq \frac{\psi_\beta}{2 |L|}  \mu(S^*_{\text{SL}}, \ell^*) \geq \frac{\psi_\beta}{2 |L|^{\beta + 1}} \rho(S^*). \qquad \quad \: \: \: \qed
\end{align*}
\section{Guarantee of FDC-Approx}

\setcounter{theorem}{4}

\noindent
\textbf{Proof of Lemma~\ref{lem:lemma5}:}
Similar to Lemma \ref{lemma1}, we use the pigeonhole principle. By the definition, each node $v \in S$, in at least $\lambda$ layers has at least $k$ outgoing edges to vertices in $T$. Therefore, based on the pigeonhole principle, there is a layer $\ell_1$ such that there are $\frac{\lambda |S|}{|L|}$ nodes in $S$ that have no less than $k$ outgoing edges to nodes in $T$ in layer $\ell_1$. Similarly, based on the pigeonhole principle, there is a layer $\ell_2$ (not necessarily different from $\ell_1$) such that there are $\frac{\lambda |T|}{|L|}$ vertices in $T$ that have no less than $r$ incoming edges from vertices in $S$. Again, by using pigeonhole principle and ignoring $\ell_1$, there is at least a layer $\ell'_1 \neq \ell_1$ such that there are $\frac{(\lambda - 1) |S|}{L}$ nodes in $S$ that have no less than $k$ outgoing edges to nodes in $T$ in layer $\ell'_1$. Similar result can be obtained for nodes in $T$. Similar to Lemma \ref{lemma1}, by iterating the above process  we have there are $\Tilde{L}_1, \Tilde{L}_2 \subseteq L$ such that $|\Tilde{L}_1| = |\Tilde{L}_2| = \xi+1$, and: 
\begin{align*}
\rho(S, T) &= \max_{\hat{L} \subseteq L} \min_{\ell \in \hat{L}} \frac{|E_\ell(S, T)|}{\sqrt{|S| |T|}} |\hat{L}|^\beta \geq \max_{\hat{L} \in \{ \Tilde{L}_1, \Tilde{L}_2 \}} \min_{\ell \in \hat{L}} \frac{|E_\ell(S, T)|}{\sqrt{|S| |T|}} |\hat{L}|^\beta\\
&\geq \max \{ \frac{\frac{(\lambda -  |\Tilde{L}_1|  + 1) |S|}{|L|} \times k}{\sqrt{|S| |T|}}  |\Tilde{L}_1| ^\beta, \frac{\frac{(\lambda -  |\Tilde{L}_2 | + 1) |T|}{|L|} \times r}{\sqrt{|S| |T|}}  |\Tilde{L}_2| ^\beta \} \:\:\: \\
&= \frac{(\lambda - \xi) (\xi+1)^\beta}{|L|} \max \{k \sqrt{a}, \frac{r}{\sqrt{a}}\}
\end{align*}
The right hand side of the above inequality is a function of $\xi$, so  $\rho(S, T)$ is not less than the maximum of this function exhibited by arbitrary integers $1 \leq \xi < \lambda$.\qed

\begin{lemma}
\cite{xy-core} Given a directed single-layer graph $G = (V, E)$, let $G[S^*, T^*]$ be its directed densest subgraph, and $[x^*, y^*]$ be the maximum cn-pair, we have $2 \sqrt{x^*y^*} \geq \frac{|E(S^*, T^*)|}{\sqrt{|S^*||T^*|}}$.
\end{lemma}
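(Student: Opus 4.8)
The plan is to exploit the local optimality of the directed densest subgraph and then invoke the maximality built into the definition of the cn-pair. Write $m = |E(S^*, T^*)|$ and $\rho^* = m/\sqrt{|S^*||T^*|}$ for the optimal density. First I would use the standard observation that, because $G[S^*, T^*]$ globally maximizes the directed density over all pairs of vertex sets, deleting any single vertex produces a valid (hence no-better) subgraph, so peeling a vertex cannot increase the density.

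Concretely, fix $u \in S^*$ and let $d^+(u)$ denote its outdegree into $T^*$. Removing $u$ yields the $(S^* \setminus \{u\}, T^*)$-induced subgraph of density $\frac{m - d^+(u)}{\sqrt{(|S^*| - 1)|T^*|}}$, which by optimality is at most $\rho^*$. Rearranging and cancelling the common factor $\sqrt{|T^*|}$ gives $m\bigl(\sqrt{|S^*|} - \sqrt{|S^*| - 1}\bigr) \leq d^+(u)\sqrt{|S^*|}$; applying the elementary estimate $\sqrt{n} - \sqrt{n-1} \geq \frac{1}{2\sqrt{n}}$ then yields $d^+(u) \geq \frac{m}{2|S^*|}$ for every $u \in S^*$. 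By the symmetric argument applied to each $v \in T^*$, every such $v$ has indegree from $S^*$ at least $\frac{m}{2|T^*|}$.

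The second step converts these minimum-degree bounds into a statement about the cn-pair. Let $\hat{x}$ be the minimum outdegree over $S^*$ and $\hat{y}$ the minimum indegree over $T^*$; these are integers with $\hat{x} \geq \frac{m}{2|S^*|}$ and $\hat{y} \geq \frac{m}{2|T^*|}$. By construction $G[S^*, T^*]$ satisfies the $[\hat{x}, \hat{y}]$-core degree constraints, so it is contained in the maximal $[\hat{x}, \hat{y}]$-core, which is therefore non-empty. Since $[x^*, y^*]$ is the cn-pair maximizing the product over all non-empty $[x, y]$-cores, we obtain $x^* y^* \geq \hat{x}\hat{y} \geq \frac{m^2}{4|S^*||T^*|} = \frac{(\rho^*)^2}{4}$, and taking square roots gives $2\sqrt{x^* y^*} \geq \rho^* = \frac{|E(S^*, T^*)|}{\sqrt{|S^*||T^*|}}$, which is exactly the claim.

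The main obstacle is the degree-bound derivation in the first step: the $\sqrt{|S||T|}$ normalization makes the removal inequality nonlinear, and the clean constant $\frac{1}{2}$ appears only after invoking $\sqrt{n} - \sqrt{n-1} \geq \frac{1}{2\sqrt{n}}$. I would also treat the degenerate cases $|S^*| = 1$ or $|T^*| = 1$ separately, where removing the only vertex on one side empties it; there the sole $S^*$-vertex has outdegree exactly $m \geq \frac{m}{2}$, so the bound holds trivially. A minor secondary point is integrality: I must pass from the real-valued lower bounds to the integer thresholds $\hat{x}, \hat{y}$ that genuinely define a cn-pair, which is safe because the true minimum degrees are integers dominating those bounds.
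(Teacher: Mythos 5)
Your proof is correct: the peeling step $d^+(u) \geq \frac{m}{2|S^*|}$ (via $\sqrt{n}-\sqrt{n-1} \geq \frac{1}{2\sqrt{n}}$), the symmetric indegree bound, containment of $G[S^*,T^*]$ in the non-empty $[\hat{x},\hat{y}]$-core, and maximality of the cn-pair all check out, as do your treatments of the $|S^*|=1$ case and integrality. Note that the paper itself offers no proof of this lemma---it is imported verbatim from the cited $[x,y]$-core work of Ma et al.---and your argument is essentially a faithful reconstruction of the proof given in that source, so there is nothing to contrast.
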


\setcounter{theorem}{5}
\begin{lemma}
$\frac{|E_{\ell^*} [S^*_{\text{SL}}, T^*_{\text{SL}}]|}{\sqrt{|S^*_{\text{SL}}| |T^*_{\text{SL}}|}} |L|^\beta \geq \rho(S^*, T^*)$.
\end{lemma}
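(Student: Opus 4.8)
The plan is to mirror the proof of the undirected analogue, Lemma~\ref{lemma3}, adapting it to the $\sqrt{|S||T|}$ normalization used for the directed density. The statement is just a combinatorial upper bound on the optimum $\rho(S^*, T^*)$, so no pigeonhole or FirmD-Core machinery enters; I only need to unfold the two optimizations hidden in the definition of $\rho$ and then apply the defining maximality of $(S^*_{\text{SL}}, T^*_{\text{SL}})$.

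First I would loosen the inner structure of $\rho(S^*, T^*)$. Letting $\hat{L}^*$ be a subset of layers attaining the outer maximum (non-empty, since the empty choice contributes nothing), I have
\begin{equation*}
\rho(S^*, T^*) = \min_{\ell \in \hat{L}^*} \frac{|E_\ell(S^*, T^*)|}{\sqrt{|S^*||T^*|}}\, |\hat{L}^*|^\beta \leq \max_{\ell \in L} \frac{|E_\ell(S^*, T^*)|}{\sqrt{|S^*||T^*|}}\, |L|^\beta,
\end{equation*}
since a minimum over $\hat{L}^* \subseteq L$ is bounded by a maximum over $L$, and $|\hat{L}^*| \leq |L|$ together with $\beta > 0$ gives $|\hat{L}^*|^\beta \leq |L|^\beta$. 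The next step invokes the definition of $(S^*_{\text{SL}}, T^*_{\text{SL}})$ and $\ell^*$ as maximizers of the single-layer directed density over all vertex pairs and all layers, so that for the layer attaining the right-hand maximum above,
\begin{equation*}
\frac{|E_\ell(S^*, T^*)|}{\sqrt{|S^*||T^*|}} \leq \max_{S,T \subseteq V}\,\max_{\ell' \in L} \frac{|E_{\ell'}(S, T)|}{\sqrt{|S||T|}} = \frac{|E_{\ell^*}(S^*_{\text{SL}}, T^*_{\text{SL}})|}{\sqrt{|S^*_{\text{SL}}||T^*_{\text{SL}}|}}.
\end{equation*}
Chaining the two displays gives the claim.

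I expect no real obstacle here; the one point demanding care is that the pair $(S^*_{\text{SL}}, T^*_{\text{SL}})$ at layer $\ell^*$ dominates the single-layer density of \emph{every} vertex pair --- in particular the ML-optimal $(S^*, T^*)$ --- in \emph{every} layer, i.e. the optimization defining it ranges over both the choice of pair and the choice of layer. Once that is applied correctly the result is immediate, and unlike Lemma~\ref{lemma3} no intermediate minimum-degree bound is needed, because the directed statement is phrased directly in terms of the single-layer density rather than a minimum induced degree.
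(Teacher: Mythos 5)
Your proposal is correct and follows essentially the same route as the paper's own proof: both arguments combine the bound $\max_{\hat{L} \subseteq L} \min_{\ell \in \hat{L}} (\cdot)\,|\hat{L}|^\beta \leq \max_{\ell \in L}(\cdot)\,|L|^\beta$ with the defining maximality of $(S^*_{\text{SL}}, T^*_{\text{SL}})$ over all vertex pairs and all layers, merely chaining the inequalities in the opposite order. Your closing observation that no minimum-degree intermediary is needed here (unlike the undirected Lemma~3) is also consistent with how the paper handles it.
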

\noindent
\textbf{Proof:}
By the definition of $\rho(.)$ and $(S^*_{\text{SL}}, T^*_{\text{SL}})$, we have:
\begin{align*}
 \frac{|E_{\ell^*} [S^*_{\text{SL}}, T^*_{\text{SL}}]|}{\sqrt{|S^*_{\text{SL}}| |T^*_{\text{SL}}|}} |L|^\beta &\geq \max_{\ell \in L} \frac{|E_\ell[S^*, T^*]|}{\sqrt{|S^*| |T^*|}} |L|^\beta \\
&\geq \max_{\hat{L} \subseteq L} \min_{\ell \in \hat{L}} \frac{|E_\ell[S^*, T^*]|}{\sqrt{|S^*| |T^*|}} |\hat{L}|^\beta  = \rho(S^*, T^*). \qed& &
\end{align*}

\noindent
\textbf{Proof of Theorem~\ref{theorem2}:}
Based on Lemmas~\ref{lem:lemma5}, 5, and 6, if $a = \frac{|\hat{S}|}{|\hat{T}|}$:
\begin{align*}
 \rho(S^+, T^+) \geq \rho(\hat{S}, \hat{T}) \geq \underset{\xi \in \mathbb{Z}, 0\leq \xi < \hat{\lambda}}{\max} \frac{(\hat{\lambda} - \xi) (\xi + 1)^{\beta}}{|L|} \times \max \{\hat{k} \sqrt{a}, \frac{\hat{r}}{\sqrt{a}}\}\\
 \geq \frac{\psi_\beta}{|L|}\sqrt{\hat{r} \hat{k}} \geq \frac{\psi_\beta}{|L|}\sqrt{x^* y^*} \geq \frac{\psi_\beta}{2|L|} \frac{|E_{\ell^*} [S^*_{\text{SL}}, T^*_{\text{SL}}]|}{\sqrt{|S^*_{\text{SL}}| |T^*_{\text{SL}}|}}
 \geq \frac{\psi_\beta}{2|L|^{\beta + 1}}\rho(S^*, T^*).\qed
\end{align*}

\section{Memory Usage}
\label{sec:Memory}
Figure~\ref{fig:Compare_memory} shows the result of comparing the memory usage of FirmCore with the state-of-the-art algorithms. FirmCore in all datasets needs less memory.  
\vspace{-2ex}

\begin{figure}
    \centering
    \includegraphics[width=0.47\textwidth]{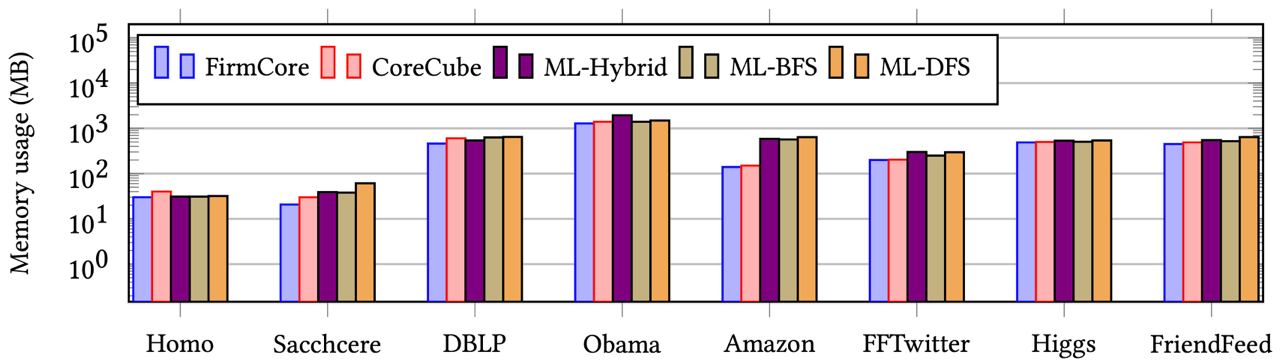}
    \vspace{-4mm}
    \caption{The memory usage of the FirmCore algorithm, ML-core algorithms, and CoreCube algorithm.}
    \label{fig:Compare_memory}
\end{figure}

\section{Density Analysis}
\label{sec:density_analysis}
While we guarantee an average degree density ($\rho$) for each FirmCore in Lemmas~\ref{lemma1} and \ref{lem:lemma5}, to show that FirmCore is able to find a large number of dense structures as per other density measures, we plot all FirmCores with number of vertices versus edge density. Figure~\ref{fig:Density} shows the plots for StackOverflow and FAO datasets. For any set $S$ of nodes, the density of the induced subgraph that we use is $\frac{|E[S]|}{|L|\cdot{|S|\choose 2}}$. Results suggest while FirmCore captures several subgraphs near clique (density $\approx 1$), it finds large sets of lower density as well. Notice that $0.15-0.2$ is a significant density for~sets~of~hundreds~of~nodes.

\begin{figure}[ht]
    \centering
    \includegraphics[width=0.39\linewidth]{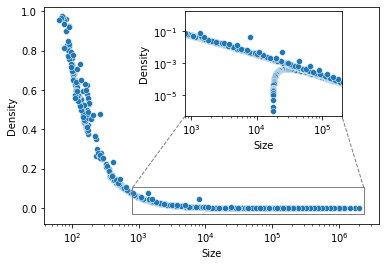}
    \includegraphics[width=0.41\linewidth]{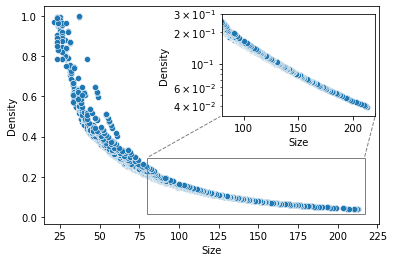}
    \vspace{-4mm}
    \caption{Edge density vs. size plots for FirmCore on FAO (Right) and StackOverflow (Left) datasets.}
    \label{fig:Density}
\end{figure}

\vspace{-4ex}
\section{Frequent Quasi-cliques}
\label{sec:frequent_quasi_clique}
As we discussed in Proposition~\ref{prop:quasi-clique-prop}, the FirmCore decomposition can speed up the extraction of frequent quasi-cliques, which is an important but NP-hard problem in ML networks. Table~\ref{tab:quasi-clique} shows the experimental results of Proposition~\ref{prop:quasi-clique-prop}'s effect on the search space of the problem. While the results are similar for other datasets, we report it on DBLP dataset because of the uniformity of the edge density across the layers. Moreover, we vary one parameter at a time keeping the other two fixed. The results suggest a significant pruning of the space by FirmCore, which in the worst case (resp. best case) prunes 87$\%$ (resp. $99.99\%$) of the search space.

\begin{table} [htpb!]
 \caption{Effect of Proposition \ref{prop:quasi-clique-prop} on the search space of the extraction of frequent cross-graph quasi-cliques on DBLP.}
 \vspace{-2ex}
\begin{center}
    \resizebox{0.43\textwidth}{!} {
\begin{tabular}{c | c |c |c |c |c | c }
 \toprule
  $\Gamma$ & $min\text{\_}sup$ & $min\text{\_}size$ & $\# solutions$ & $|V'|$ & |V| & $\%$ of pruning\\
  \midrule
   \multirow{8}{*}{$(.5, .5, .5, .5, .5, .5, .5, .5, .5, .5)$} & .5 & \multirow{3}{*}{3} & 8 & \textbf{4878} & \multirow{11}{*}{513629} & 99.05\\\cline{2-2}\cline{4-5}\cline{7-7}
   & .4 &  & 195 & \textbf{19472} &  & 96.21\\\cline{2-2}\cline{4-5}\cline{7-7}
   & .3 &  & 3394 & \textbf{66414} &  & 87.07\\\cline{2-4}\cline{4-5}\cline{7-7}
   & \multirow{8}{*}{.2} & 13 & 1 & \textbf{109} &  & 99.98\\\cline{3-5}\cline{7-7}
   &  & 11 & 8 & \textbf{217} &  & 99.96\\\cline{3-5}\cline{7-7}
   &  & 9 & 116 & \textbf{1274} &  & 99.75\\\cline{3-5}\cline{7-7}
   &  & 7 & 1292 & \textbf{5724} &  & 98.89\\\cline{3-5}\cline{7-7}
   &  & \multirow{4}{*}{8} & 121 & \textbf{1274} &   & 99.75\\
  \cline{1-1}\cline{4-5}\cline{7-7}
  $(.6, .6, .6, .6, .6, .6, .6, .6, .6, .6)$ &  &  & 18  & \textbf{217}  &  & 99.96\\\cline{1-1}\cline{4-5}\cline{7-7}
  $(.8, .8, .8, .8, .8, .8, .8, .8, .8, .8)$ &  &  & 13  & \textbf{109}  &  & 99.98\\\cline{1-1}\cline{4-5}\cline{7-7}
  $(1, 1, 1, 1, 1, 1, 1, 1, 1, 1)$ &  &  & 2 & \textbf{18} &  & 99.997\\
 \bottomrule
\end{tabular}
}
 \label{tab:quasi-clique}
\end{center}
\vspace{1ex}
\end{table}

\section{Correlation of FC Index and Top-$\lambda$ Degree}
\label{sec:corr}
In this part, we study the correlation of FirmCore index and Top$-\lambda$ degree. As an important characteristic of the FirmCore, we found that in real networks, there is a strong correlation between the $core_\lambda$ index of a node and its Top-$\lambda$ degree, which is its upper bound. In our experiment we remove nodes with Top$-\lambda$ degree equal to zero. Based on our results, the Spearman's rank correlation is more than $0.7$ ($p-$value $< 0.0001$) in all datasets. The Spearman's rank correlation for Higgs (resp. Amazon) dataset is $0.995, 0.973, 0.968,$ and $0.891$ (resp. $0.701, 0.723, 0.771,$ and $0.917$) for $\lambda = 1, 2, 3, 4$. Moreover, for Obama dataset (resp. Cannes) the Spearman's rank correlation is $0.815, 0.960,$ and $0.768$ (resp. $0.932, 0.969,$ and $0.833$) for $\lambda = 1, 2, 3$. Finally, the Spearman's correlation where $\lambda = 1, 2, 3$, are $0.997, 0.969,$ and $0.971$ (resp. $0.95, 0.96$, and $0.71$), on FriendFeed (resp. UCL).

\begin{figure}[htpb!]
    \centering
    \includegraphics[width=0.43\linewidth]{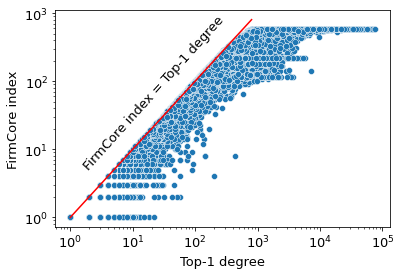}
    \includegraphics[width=0.43\linewidth]{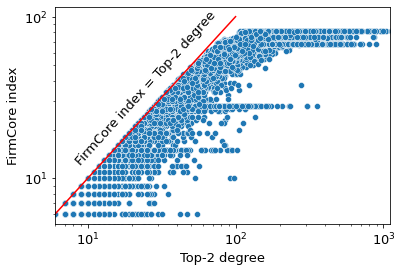}
    \vspace{-2.5ex}
    \caption{Spearman's rank between FirmCore index and Top-$\lambda$ degree in FriendFeed network.}
    \label{fig:correlation}
    \vspace{-2ex}
\end{figure}

\vspace{-2ex}

\begin{figure}[htpb!]
    \centering
    \includegraphics[width=0.835\linewidth]{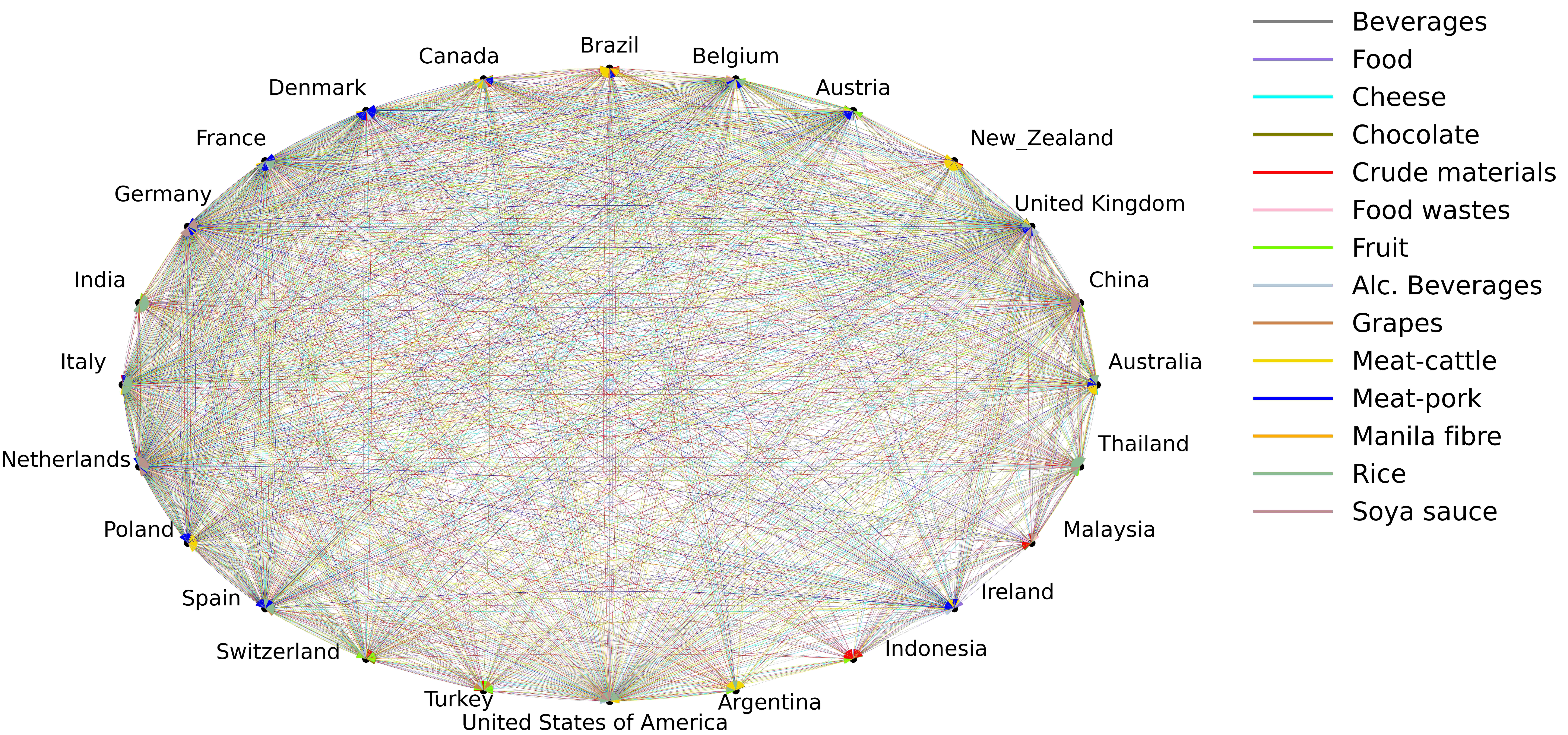}
    \vspace{-3.5mm}
    \caption{$S$ subgraph of $(S, T)-$densest subgraph extracted by FDC-Approx from the FAO dataset.}
    \label{fig:case-study-fao}
    \vspace{-2ex}
\end{figure}

\section{Case Study of FAO}
\label{sec:case_FAO}
To show the effectiveness of FDC-Approx algorithm, we report its output with $\beta = 0.5$ on the FAO dataset. Let $H = G[S, T]$ be the densest subgraph found by FDC-Approx, Figure~\ref{fig:case-study-fao} reports the subgraph $S$. Notice that $S$ and $T$ are not necessarily disjoint, and in this case $S \subseteq T$. The $(S, T)-$induced subgraph contains 24 nodes in $S$, $164$ nodes in $T$, and 14 layers, which are automatically selected by the objective function $\rho(.)$. The objective function value is $19.17$, and the minimum average degree encountered on the layer corresponding to ``Soya Sauce'' is $0.3$. By the nature of the dataset, the densest subgraph shows the countries with a large number of trades where $S$ reports countries with a large number of exports. In the output of FDC-Approx, the third (resp. seventh) layer corresponds to trades on ``Cheese'' (resp. ``Fruit''), which Germany, Netherland, and Italy (resp. Brazil, China, India) are the leading exporters\footnote{\url{https://www.statista.com}} in the world. The results for other layers are the same; in fact, our results show that the first two leading exporters countries of each good, which corresponds to layers, are in the $S$.

\end{document}